\newtheorem{theorem}{Theorem}[section]
\newtheorem{lemma}[theorem]{Lemma}
\theoremstyle{definition}
\newtheorem{definition}[theorem]{Definition}
\newtheorem{proposition}[theorem]{Proposition}
\newtheorem{example}[theorem]{Example}
\newtheorem{remark}[theorem]{Remark}
\newtheorem{corollary}[theorem]{Corollary}
\def\kk{\Bbbk}
\def\R{\mathcal{R}}
\def\L{\mathcal{L}}
\def\X{\mathfrak{X}}
\def\Y{\mathfrak{Y}}
\def\trr{\triangleright}
\def\btr{\blacktriangleright}
\def\rtt{\rightthreetimes}
\def\trl{\triangleleft}
\def\btl{\blacktriangleleft}
\def\ltt{\leftthreetimes}
\def\id{\mbox{id}}
\def\nd{\mbox{End}}
\def\hm{\hbox{Hom}\,}
\def\alg{\hbox{Alg}\,}
\def\p{\partial}
\numberwithin{equation}{section}
\begin{document}
	 
\title{
	Quantum Calculi: differential forms and vector fields in noncommutative geometry
}
\author{Andrzej Borowiec\\
	{\tiny University of Wroclaw,  Institute of Theoretical Physics \\
		andrzej.borowiec@uwr.edu.pl}}


\maketitle
\large{\centerline{Devoted to the memory of Zbigniew Oziewicz,}
	\centerline{a good friend and an inspiring personality.\footnote{To appear in
	"Scientific Legacy of Professor Zbigniew Oziewicz: Selected Papers from the International Conference Applied Category Theory Graph-Operad-Logic” In Series on Knots and
	Everything (SKAE) in World Scientific Publishing.		
}}}

\begin{abstract}
	     In this paper, we revise the concept of noncommutative vector fields introduced previously in \cite{AB96,AB97}, extending the framework, adding new results and clarifying the old ones. Using appropriate algebraic tools certain shortcomings in the previous considerations are filled and made more precise. We focus on the correspondence between so-called Cartan pairs and first-order differentials.  The case of free bimodules admitting more friendly "coordinate description" and their braiding is considered in more detail. Bimodules of right/left universal vector fields are explicitly constructed.
\end{abstract}
 
  \section{Introduction} 
 Algebraic differential calculus is one of the cornerstones of noncommutative geometry present there from the very beginning in various contexts. In the original Connes approach it appeared as universal differential forms related to cyclic cohomology. \cite{Connes}. 
 Woronowicz \cite{ Woronowicz} introduced general bicovariant differential calculus on (matrix) quantum groups. Following his paper and \cite{Manin,Pusz,Wess},   many examples of covariant calculi on quantum plains have been elaborated  around this time.  These include our systematic study  of optimal calculi with partial derivatives on coordinate algebras \cite{bko94}, and their extension to higher orders \cite{bk95}.  
 
 In traditional differential geometry on smooth manifolds one has both, differential forms and vector fields, at our disposal.
 In a sense, vector fields should be considered as a primary object since they can be easily defined in purely algebraic ways as derivations of the algebra of smooth functions. Their action on function (via derivation) is, consequently, the main property of vector fields. On the other hand, they form a Lie module over the same algebra. Then differential one-forms can be introduced as a secondary objects (and then extended to entire Grassmann algebra), by taking the module dual. In fact, the derivations are module duals of so-called K\"{a}hler differentials, to which usual (smooth) differential one-forms are bidual.
 Unfortunately, generalization of this scheme to noncommutative setting has no unique answer, see, e.g., \cite{Michor96,Lunts97,GMS05,GS09}. The main reason is that derivations of noncommutative algebra do not form a module, so cannot be multiplied by elements of the algebra. But yet, they are closed under the commutator bracket. Consequently, they are equipped in a Lie algebra structure. This property  is preferred for some applications, see, e.g. \cite{Michor96,Manoh17,Marmo18,Kauffman21}.
 In contrast, in noncommutative geometry, differential forms are more fundamental and easier to handle. However, the notion of vector fields can also be introduced \cite{AB96,AB97} and turns out to be useful for some applications \cite{Hajac,Llena03,Beggs03,Pena08,Beggs14,Beggs20,Aschieri20}, including the recent studies of quantum Riemannian geometry \cite{BeggsMajid}.
 
 Our aim here is to enforce, improve and analyze in more adequate mathematical language the main arguments in favor of the definition introduced in \cite{AB96}. Our approach was motivated twofold. Firstly, by the idea of extending Cartan type duality between bimodule of one forms and vector fields.  Second, by the possibility of introducing the partial derivatives in a noncommutative setting and the directional covariant derivatives \cite{AB97}.
 
 A similar concept to employ module duality appeared in some papers around the same time, see, e.g. \cite{Peter96,Sitarz96,Dubois96,Aschieri96}.
 Our novelty was to provide an abstract algebraic definition which takes into account an action of vector fields (called Cartan pairs) on "functions" represented by elements of the algebra. This action assumes some generalized right or left handed version of the Leibniz rule combined with bimodule multiplications.  
 Then, it turns out that, by applying a module duality one can assign to any first-order calculus, a well-functioning notion (noncommutative generalization) of vector fields with a good operational meaning \cite{AB97,BeggsMajid}.
 
 But the situation is far to be symmetric. Unlike in standard differential geometry, the converse route from vector fields to the initial differential one-forms is not fully satisfying and needs much more technical assumptions, concerning their bimodule structure and/or special properties for a "ring of functions". The reasons for this asymmetry are rooted in the functorial properties of the module duality. 
 \subsubsection{Zbigniew Oziewicz (1941-2020)}
 Zbigniew Oziewicz, Professor of Wroclaw University, Institute of Theoretical Physics and Universidad Nacional Autonoma de Mexico, Facultad de Estudios Superiores Cuautitlan,
 was a very extraordinary man with a great passion to science.
 In particular, Zbyszek was a big fan of advanced mathematical methods believing that developing new tools will allow to solve almost all problems encountered in theoretical physics. His {\it idee fixe} was to use abstract mathematical techniques in order to reformulate physical problems in a coordinate-free and basis independent manner. For these reasons, he became interested in noncommutative geometry at the beginning 1990s when our cooperation started.  In fact, I  was inspired with this subject by Zbyszek and his enthusiasm. He invited Slava Kharchenko, that time an algebraist from Novosibirsk,  to participate in the program. Several joint articles, influenced by \cite{Woronowicz,Manin,Pusz,Wess} and developing general formalism of differential calculi on coordinate algebras with values in free bimodules, were published 
 \cite{bko94,bk95,bmo92,bk95a,bk95b,bko97}. In 1992, Zbyszek organized an international conference devoted to various aspects of mathematical physics. He invited many leading and active till now experts of noncommutative geometry \cite{OJB93}. Soon after, he left Poland for Mexico where he spent the rest of his academic career.  Working at UNAM, Zbigniew organized, inter alia, a series of interdisciplinary and internationally recognized conferences/workshops under the common motto \textit{ Graph-Operad-Logic}, even if it suggests a rather specialized topic.
 \section{Preliminaries and notations}
 Throughout this paper, we shall use a standard algebraic notation borrowed from e.g., \cite{Bourbaki}, \cite{Kasch}. 
 We work with  vector spaces (mostly infinite-dimensional) over fixed algebraically closed field $\kk$ (usually $\mathbb{C}$), all maps between two spaces $V$ and $U$ are $\kk$-linear maps i.e., belong to $\hm(V,U)\equiv \hm_\kk(V,U)$. The tensor product $\otimes\equiv \otimes_{\kk}$ if not otherwise stated.
 Denoting by $V^\bullet=\hm(V,\kk)$ the dual space one comes to the following canonical injective homomorphism (that becomes an isomorphism only in the case $\dim U<\infty$.)
 \begin{equation}\label{i1}
 	\hm(V,U)\hookrightarrow \hm(U^\bullet, V^\bullet), \quad \Phi \mapsto \Phi^\bullet\,,
 \end{equation}
 given by the transpose map $\Phi^\bullet(\alpha)(v)=\alpha(\Phi(v))$, where $\alpha\in U^\bullet, v\in V$. 
 The injective homomorphisms  $V\hookrightarrow V^{\bullet\bullet}$ is bijective only for $\dim V<\infty$. 
 Other canonical injective homomorphisms, which become isomorphisms when at least one of the ingredients is finite-dimensional \footnote{$V^\bullet\otimes V$ does not contain the  identity $\id_V$ for $\dim V=\infty$.},  has the form
 \begin{equation}\label{i2}
 	V^\bullet\otimes U\rightarrow \hm(V, U)\,. 
 \end{equation}
 In fact, $ V^\bullet\otimes U\cong \hm^{{\rm fin}}(V, U)\doteq \{ \Phi \in \hm(V, U)| \dim {\rm Im}(\Phi)<\infty\}$.
 We recall that in the (strictly monoidal) category of vector spaces each short exact sequence
 \begin{equation}\label{e1}
 	0\longrightarrow V'\stackrel{f}{\longrightarrow} V\stackrel{g}{\longrightarrow} V''\longrightarrow 0
 \end{equation}
 splits i.e., $V\cong V'\oplus V''$ and $V''\cong V/V'$. It means that $f$ is an injective map (monomorphism), $g$ is a surjective map (epimorphism) and
 ${\rm Im}f=\ker g\equiv V'$. Furthermore,  there exist the splitting maps
 $ V\stackrel{f_0}{\longrightarrow} V'$ and $ V''\stackrel{g_0}{\longrightarrow} V$ such that $f_0\circ f=\id_{V'}$ and $g\circ g_0=\id_{V''}$.
 These, of course, imply exactness (and splitting) for the reversed sequence:
 \begin{equation}\label{e2}
 	0\longrightarrow V''\stackrel{f_0}{\longrightarrow} V\stackrel{g_0}{\longrightarrow} V'\longrightarrow 0\,.
 \end{equation}
 Moreover, for any vector space $U$ the following sequences are exact and split along with \eqref{e1}:
 \begin{subequations}
 	\begin{align}
 		0\longrightarrow \hm(U,V') \stackrel{\bar f}{\longrightarrow} \hm(U,V)\stackrel{\bar g}{\longrightarrow} \hm(U,V'')\longrightarrow 0\,,\label{e3a}\\
 		0\longrightarrow \hm(V'',U)\stackrel{g^*}{\longrightarrow} \hm(V,U)\stackrel{ f^*}{\longrightarrow} \hm(V',U)\longrightarrow 0\,,\label{e3b}\\
 		0\longrightarrow V'\otimes U\quad\stackrel{f\otimes \id_U}{\longrightarrow} V\otimes U\stackrel{g\otimes\id_U}{\longrightarrow}\quad V''\otimes U\longrightarrow 0\,,\label{e3c}
 	\end{align}
 \end{subequations}
 where $\bar f(\phi)=f\circ\phi$, $\bar g(\phi)=g\circ\phi$, $ g^*(\psi)=\psi\circ g$, $f^*(\psi)=\psi\circ f$. Replacing $U$ by $\kk$ in the second row one obtains the exact sequence for the dual spaces with pullback maps.
 
 Fix some unital associative (possibly noncommutative, infinite-dimensional) linear algebra $R$ over $\kk$ with a (surjective) multiplication map denoted by $\mu: R\otimes R\rightarrow R$. In order to simplify the notation we shall write $1\equiv 1_R\in R$. We are interested in unital left/right/bi modules over $R$. So, a module means unital $R$-module and at the same time  a $\kk$-vector space.  Sometimes, for clarity, we use the notation $M_R$ (respectively, $\ _R\! M$) to denote a right (respectively left)  $R$-module. 
 Free right (respectively, left) modules are of the form $V\otimes R$ (respectively, $R\otimes V$) for some vector space $V$ of generators.
 The space of $R$-linear maps between two right modules is denoted as $\hm_R(M_R, N_R)\subset \hm(M, N)$.
 There are very useful canonical isomorphisms of vector spaces
 \begin{equation}\label{ii3}
 	\hm_R(R\otimes M_R,N_R)\cong \hm(R,\hm_R(M_R,N_R))\,,
 \end{equation}
 and for left modules
 \begin{equation}\label{ii4}
 	\hm_R(_R\! M\otimes R,\ _R\!N)\cong \hm(R,\hm_R(_R\! M,\ _R\!N))\,.
 \end{equation}
 Some other functorial properties presented above have to be suitably modified when applied in the category of modules over the ring $R$.\footnote{Some modifications may depend on the particular properties both modules involved as well as the rings. Usually they are preserved in the case of free finitely generated modules or when the ring has special properties \cite{Bourbaki,Kasch}.} 
 
 Next, we recall that for right module $M=M_R$ its module (right) dual (over the ring $R$) is defined by $(M_R)^\trr=\hm_R (M_R, R_R)$ and bears a canonical left module structure $(r\phi)(m.r')=r\phi(m.r')=r\phi(m)r'$. In terms of a canonical bilinear  pairing  $<|>:M^\trr\otimes M\rightarrow R$, provided by the evaluation map, one obtains
 \begin{equation}\label{i3a}
 	<r.\phi| m.r'>=r.\phi(m).r' \ \,. \end{equation}
 For example $(R_R)^\trr =\,_R\! R$ \footnote{The ring itself can be considered as the right $R_R$, left $_R\! R$ or bimodule $_R\! R_R$.}. 
 By the same token one can define a left dual $(_R\! N)^\trl=\hm_R (_R\! N, _R\! R)$ with the canonical right module structure $<r'.m| \xi.r>=r'.\xi(m).r$. Thus double dual (bidual) of right (respectively, left) module is again a right (res. left module). One has a canonical map
 \begin{equation}\label{i3b}
 	M_R\longrightarrow (M_R^\trr) ^\trl\,,\quad m\mapsto m^{\trr\trl} \,,
 \end{equation}
 where $ m^{\trr\trl}(\phi)=\phi(m)$ for each $\phi\in M_R^\trr$.  It is injective for torsionless modules.    $M_R$ is called reflexive if $M_R\cong (M_R\,^\trr) ^\trl$ (resp. $_R\! N\cong (_R\! N^\trl)^\trr$).
 
 There is also an analogue of transpose homomorpism which is a linear map between two vector spaces
 (if $M, N$ are right modules then $M^\trr, N^\trr$ are left ones and vice versa)
 \begin{equation}\label{i4}
 	\hm_R(M,N)\rightarrow \hm_R(N^\trr, M^\triangleright), \quad \Psi \mapsto \Psi^\trr\,,
 \end{equation} 
 given by the transposition map $\Psi^\trr (a)(m)\doteq a(\Psi(m))$, or 
 \[<\Psi^\trr(a)| m>=<a|\Psi(m)>\,,\] where $a\in N^\trr, m\in M $. 
 It has many interesting properties, e.g. if $\Psi$ is surjective then $\Psi^\trr$ is injective (but not vice versa in general), cf. \cite{Bourbaki,Kasch}.  
 \begin{proposition}\label{key1} (see, e.g. \cite{Bourbaki,Kasch})
 	The exact sequence of right modules:
 	\begin{equation}\label{i7}
 		K\stackrel{f}{\longrightarrow} M\stackrel{g}{\longrightarrow} N
 		\longrightarrow 0
 	\end{equation}
 	implies that the following "dual'' sequence of left modules (cf. \eqref{e3b})
 	\begin{equation}\label{i8}
 		0\longrightarrow 
 		N^\trr\stackrel{g^\trr}{\longrightarrow} M^\trr\stackrel{f^\trr}{\longrightarrow} K^\trr
 	\end{equation}
 	is also exact. Here, in contrast to \eqref{e3b}, $f$ is not necessarily injective and $f^\trr$ is not necessarily surjective. This conclusion does not change, in general, even when \eqref{i7} extends to the short exact sequence $0\rightarrow K\rightarrow M\rightarrow N\rightarrow 0$ i.e., when $f$ is injective. However, if $f$ is injective and \eqref{i7}  splits then \eqref{i8} extends to a short exact sequence and splits as well. Of course, the same conclusion holds if one replaces the right by the left and vice versa.
 \end{proposition}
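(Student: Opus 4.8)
The plan is to recognise the statement as the (contravariant) left-exactness of the module-dual functor $(-)^\trr=\hm_R(-,R_R)$ on right modules, together with its compatibility with splittings. Concretely, I must establish three things: (i) that $g^\trr$ is injective; (ii) that ${\rm Im}\, g^\trr=\ker f^\trr$; and, under the extra hypotheses, (iii) that the splitting maps of \eqref{i7} dualise to a splitting of \eqref{i8}. Throughout I will exploit the contravariant functoriality of the transpose, $(\Psi\circ\Phi)^\trr=\Phi^\trr\circ\Psi^\trr$, which is immediate from $\Psi^\trr(a)=a\circ\Psi$, together with the fact recorded after \eqref{i4} that the transpose of a surjection is injective.

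For (i), since $g$ is an epimorphism the observation after \eqref{i4} yields injectivity of $g^\trr$ at once; directly, if $a\circ g=0$ then $a$ vanishes on ${\rm Im}\, g=N$, so $a=0$. For the inclusion ${\rm Im}\, g^\trr\subseteq\ker f^\trr$ I use exactness of \eqref{i7} at $M$, which gives $g\circ f=0$; the composition rule then produces $f^\trr\circ g^\trr=(g\circ f)^\trr=0$, so every element $a\circ g$ of ${\rm Im}\, g^\trr$ lies in $\ker f^\trr$.

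The heart of the argument is the reverse inclusion $\ker f^\trr\subseteq{\rm Im}\, g^\trr$. Take $b\in M^\trr$ with $f^\trr(b)=b\circ f=0$; then $b$ annihilates ${\rm Im}\, f=\ker g$. Since $g$ is surjective with kernel $\ker g$, I would factor $b$ through $g$ by setting $\bar b(g(m))\doteq b(m)$; this is well defined precisely because $b$ kills $\ker g$, and one verifies that $\bar b$ is right $R$-linear because $g$ and $b$ are (the relation $\bar b(n.r)=\bar b(n).r$ being tested on preimages under the surjection $g$). Then $\bar b\in N^\trr$ and $g^\trr(\bar b)=\bar b\circ g=b$, as required. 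The only genuinely delicate point here is the $R$-linearity of $\bar b$; everything else is formal, and I expect this to be the main obstacle. This step also clarifies why $f^\trr$ need not be surjective when \eqref{i7} is merely short exact: lifting an arbitrary $c\in K^\trr$ to some $b\in M^\trr$ with $b\circ f=c$ would require extending $c$ along the injection $f$, and left-exactness provides no such mechanism without a splitting.

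Finally, for (iii) suppose $f$ is injective and \eqref{i7}, extended to $0\to K\to M\to N\to 0$, splits, with splitting maps $f_0\colon M\to K$ and $g_0\colon N\to M$ satisfying $f_0\circ f=\id_K$ and $g\circ g_0=\id_N$. Dualising and using contravariant functoriality gives $g_0^\trr\circ g^\trr=(g\circ g_0)^\trr=\id_{N^\trr}$ and $f^\trr\circ f_0^\trr=(f_0\circ f)^\trr=\id_{K^\trr}$. The first exhibits $g^\trr$ as a split monomorphism and the second exhibits $f^\trr$ as a split epimorphism; combined with (i) and (ii) this shows that
\begin{equation*}
0\longrightarrow N^\trr\stackrel{g^\trr}{\longrightarrow} M^\trr\stackrel{f^\trr}{\longrightarrow} K^\trr\longrightarrow 0
\end{equation*}
is a split short exact sequence, so $M^\trr\cong N^\trr\oplus K^\trr$. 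The symmetric statement for left modules follows by interchanging $\trr$ and $\trl$ throughout.
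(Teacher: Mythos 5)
Your proof is correct. The paper gives no proof of this proposition, deferring entirely to \cite{Bourbaki,Kasch}; your argument is exactly the standard left-exactness argument for the contravariant functor $\hm_R(-,R_R)$ found there --- injectivity of $g^\trr$ from surjectivity of $g$, exactness at $M^\trr$ by factoring $b$ through $g$ when $b$ annihilates $\ker g={\rm Im}\,f$ (with the $R$-linearity of $\bar b$ checked on preimages), and dualisation of the $R$-linear splitting maps via $(\Psi\circ\Phi)^\trr=\Phi^\trr\circ\Psi^\trr$ --- so it correctly supplies the details the paper omits.
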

 
 An interesting question is how to dualize bimodules. It appears that right (respectively, left) dual of a bimodule $M^\trr=\hm_{(-,R)}(M, R_R)$ (respectively,  $M^\trl=\hm_{(R,-)}(M, _R\!R)$) is again a bimodule with the right (respectively, left) module structure given by 
 \begin{equation}\label{i9}
 	<\phi.r| m>=\phi(r.m),\quad (\mbox{respectively,} <m|r.\phi>=\phi (m.r))\,.
 \end{equation}
 The left (respectively, right) module structure remains the same as, e.g. in \eqref{i3a}. Moreover, if $\Psi\in \hm_{(R,R)}(M, N)$ is a bimodule map (cf. \eqref{i4} ), its right (respectively, left) transpose $\Psi^\trr\in \hm_{(R,R)}(N^\trr, M^\trr)$ ($\Psi^\trl\in \hm_{(R,R)}(N^\trl, M^\trl)$) is a bimodule map as well.  
 Further, one needs the following \footnote{More general setting has been proposed in \cite{bko97}.}:
 \begin{proposition}\label{lmod}
 	Let $M=M_R$ be a right module and let $\beta\in \hm_R(R\otimes M, M)$ be a right modules map. Then the following statements are equivalent:
 	\begin{enumerate}
 		\item[(i)] $\beta$ determines a left module structure  making $M$ a bimodule i.e., $(r.m).r'\doteq \beta(r\otimes m).r'\equiv \beta(r\otimes m.r')\doteq r.(m.r')$
 		\item[(ii)] $\beta(1\otimes m)=m$ and $\beta(rr'\otimes m)=\beta(r\otimes\beta(r'\otimes m))$, or equivalently, $\beta\circ (\mu\otimes \id_M)=\beta\circ (\id_R\otimes \beta)$ on $R\otimes R\otimes M$.
 		\item[(iii)] there is a $\kk$-algebra homomorphism $\hat\beta\in\alg(R,\nd_R(M))$ i.e., $\hat\beta(r_1r_2)=\hat\beta(r_1)\circ\hat\beta(r_2)$, where $\hat\beta(r)(m)\doteq \beta(r\otimes m)$ and $\hat\beta(1)=\id_M$.
 	\end{enumerate}
 	In addition, if $M=V\otimes R$ is a free right module generated by the vector space $V$ then:
 	\begin{enumerate}
 		\item[(iv)] there is a braiding map $\tilde\beta \in\hm(R\otimes V, V\otimes R)$ such that  $\beta(r\otimes v\otimes r')\doteq\tilde\beta(r\otimes v).r'$ satisfies  (ii). In particular $\tilde\beta(1\otimes v)=v\otimes 1$.	\end{enumerate} 
 \end{proposition}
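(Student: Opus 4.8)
The plan is to establish the chain of equivalences $(i)\Leftrightarrow(ii)\Leftrightarrow(iii)$ in full generality, and then $(ii)\Leftrightarrow(iv)$ under the additional freeness hypothesis $M=V\otimes R$. The single structural fact driving everything is that $\beta\in\hm_R(R\otimes M,M)$ is by assumption right $R$-linear, meaning $\beta(r\otimes m.r')=\beta(r\otimes m).r'$. This is precisely the bimodule compatibility $r.(m.r')=(r.m).r'$ between the putative left action and the given right action, so that compatibility never has to be verified separately and only unitality and associativity of the left action remain at issue.

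For $(i)\Leftrightarrow(ii)$ I would observe that $\beta$ equips $M$ with a left $R$-module structure exactly when $\beta(1\otimes m)=m$ (unitality) and $\beta(rr'\otimes m)=\beta(r\otimes\beta(r'\otimes m))$ (associativity), which are the two conditions of $(ii)$; combined with the automatic compatibility noted above this makes $M$ a bimodule, and the displayed formula $r.m.r'=\beta(r\otimes m).r'$ is then merely the definition of the resulting two-sided action. For $(ii)\Leftrightarrow(iii)$ I would invoke the canonical isomorphism \eqref{ii3} with $N=M$, namely $\hm_R(R\otimes M,M)\cong\hm(R,\nd_R(M))$, under which $\beta$ corresponds to $\hat\beta$ with $\hat\beta(r)(m)=\beta(r\otimes m)$. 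That each $\hat\beta(r)$ actually lands in $\nd_R(M)$ is again the right $R$-linearity of $\beta$, and reading the two conditions of $(ii)$ through this correspondence turns unitality into $\hat\beta(1)=\id_M$ and associativity into multiplicativity $\hat\beta(r_1r_2)=\hat\beta(r_1)\circ\hat\beta(r_2)$, i.e. $\hat\beta\in\alg(R,\nd_R(M))$.

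For the free case $M=V\otimes R$ and $(ii)\Leftrightarrow(iv)$, the key is that right $R$-linearity forces $\beta$ to be determined by its values on generators: setting $\tilde\beta(r\otimes v)\doteq\beta(r\otimes v\otimes 1)\in V\otimes R$, one recovers $\beta(r\otimes v\otimes r')=\tilde\beta(r\otimes v).r'$ for every $r'$, and conversely any $\tilde\beta\in\hm(R\otimes V,V\otimes R)$ defines a right $R$-linear $\beta$ by this formula. Under this dictionary unitality $\beta(1\otimes m)=m$ reads off on generators as $\tilde\beta(1\otimes v)=v\otimes 1$. To translate associativity I would write $\tilde\beta(r'\otimes v)=\sum_i v_i\otimes s_i$, so that $\beta(r\otimes\beta(r'\otimes v\otimes 1))=\beta(r\otimes\sum_i v_i\otimes s_i)=\sum_i\tilde\beta(r\otimes v_i).s_i$ by right linearity, whereas $\beta(rr'\otimes v\otimes 1)=\tilde\beta(rr'\otimes v)$; equating the two and clearing the bookkeeping yields the hexagon (braiding) relation $\tilde\beta\circ(\mu\otimes\id_V)=(\id_V\otimes\mu)\circ(\tilde\beta\otimes\id_R)\circ(\id_R\otimes\tilde\beta)$ on $R\otimes R\otimes V$.

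I expect the only genuinely non-formal step to be this last translation in the free case: the equivalences $(i)\Leftrightarrow(ii)\Leftrightarrow(iii)$ are essentially a matter of reading the defining isomorphism \eqref{ii3} in three different notations, whereas extracting the braiding relation requires careful tracking of the tensor slots and, in particular, of where the residual right multiplication by $s_i$ lands after the inner $\tilde\beta$ has been applied to $r'\otimes v$. Once the identification $\tilde\beta(r\otimes v)=\beta(r\otimes v\otimes 1)$ is fixed, however, even this collapses to a direct substitution, so no serious obstacle is anticipated.
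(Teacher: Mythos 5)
Your proposal is correct and follows essentially the same route as the paper: $(i)\Leftrightarrow(ii)$ by unpacking the unital bimodule axioms (with compatibility automatic from right $R$-linearity of $\beta$), $(iii)$ via the canonical isomorphism $\hm_R(R\otimes M,M)\cong\hm(R,\nd_R(M))$, and $(iv)$ via restriction to generators using $\tilde\beta(r\otimes v)=\beta(r\otimes v\otimes 1)$ and $\hm_R(R\otimes V\otimes R,V\otimes R)\cong\hm(R\otimes V,V\otimes R)$. The only addition is that you spell out the resulting hexagon relation $\tilde\beta\circ(\mu\otimes\id_V)=(\id_V\otimes\mu)\circ(\tilde\beta\otimes\id_R)\circ(\id_R\otimes\tilde\beta)$ explicitly, which the paper leaves implicit; it is correctly derived.
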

 \begin{proof}
 	(i)$\Leftrightarrow$(ii) is obvious and follows directly from (unital) bimodule axioms.
 	To see (iii) one has to take into account the following canonical isomorphisms
 	\[ \hm_R(R\otimes M_R, M_R)\cong \hm(R, \nd_R(M_R)) \supset\alg(R,\nd_R(M_R))\,.\]
 	Similarly, (iv) follows from $\tilde\beta(r\otimes v)=\beta(r\otimes v\otimes 1_R)=\hat\beta(r)(v\otimes 1_R)$
 	\begin{eqnarray}{}
 		\hm_R(R\otimes V\otimes R, V\otimes R)&\cong &\hm(R\otimes V,V\otimes R)\,,
 	\end{eqnarray}
 	since  $\hm_R(V\otimes R, M_R)\cong \hm(V, M_R)$. 
 \end{proof}
 \begin{corollary}\label{coord1} (Coordinate description, cf. \cite{bko94,bk95,bk95b}.) Let $V$ be a vector space with a basis $\{\xi^i\}_{i\in I}$ ($I$ being a finite or infinite set of indices). Let $M_R=V\otimes R$ be a right free module. Thus, by using Einstein summation convention, any element $\omega=\xi^i\otimes \omega_i$ can be uniquely represented by its (covariant) coordinates $\{\omega_i\in R, i\in I\}$  (the cardinality of $I$ is the dimensionality of $V$) \footnote{All sums have to be finite, therefore only finite number of coordinates is allowed to be nonzero.}. With this notation $\beta(r\otimes \omega)=\xi^k\otimes \hat\beta_k^i(r)\omega_i$, where
 	\begin{equation}\label{coord2}
 		r.\omega.s=\xi^k\otimes \hat\beta_k^i(r)\omega_i\,s\,,\quad \hat\beta_k^i(rr')=\hat\beta_k^l(r)\hat\beta_l^i(r')\,, \quad \tilde\beta (r\otimes \xi^i)=\xi^k\otimes\hat\beta_k^i(r)
 	\end{equation}	
 	and the matrix $\hat\beta^k_i(r)$ with entries in $R$ represents the element $$\hat\beta(r)\in \nd_R(M_R)\cong \hm(V,V\otimes R)\,,$$ $\hat\beta(r)(\xi^i)=\xi^k\otimes \hat\beta_k^i(r)$. Obviously, $\hat\beta^i_k\in\nd(R)$ (each column contains only the finite number of nonzero elements) 
 	and  $\hat\beta^i_k(1)=1\,\delta^i_k$. In some applications, elements $\xi^i=dx^i$ can be related with generators $x^i$ of the algebra $R$ by the calculus $d:R\rightarrow V\otimes R$, see the next sections.  
 \end{corollary} 
 \begin{corollary}\label{lmod}
 	Similarly, for left module $_R\! N$ and $\alpha\in \hm_R(N\otimes R,N)$ one gets right (bimodule) structure $\iff$ $\alpha(n\otimes 1)=n$ and $\alpha(n\otimes rr')=\alpha(\alpha(n\otimes r)\otimes r')$, or equivalently $\alpha\circ (\id_N\otimes \mu)=\alpha\circ (\alpha\otimes \id_R)$ on $N\otimes R\otimes R$. In this case $\hat\alpha\in\alg(R^{op},\nd_R(N))$. Considering $(R\otimes V)_\alpha$ as a bimodule, the braiding
 	$\tilde\alpha \in\hm(V\otimes R, R\otimes V) $ satisfies $\tilde\alpha (v\otimes r)=\alpha(1\otimes v\otimes r)=\hat\alpha(r)(1\otimes v)$ and $\tilde\alpha(v\otimes 1)=1\otimes v$.   
 \end{corollary}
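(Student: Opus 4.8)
The plan is to run the proof of Proposition~\ref{lmod} essentially verbatim under the interchange of left for right, so I would only highlight the one place where the two arguments genuinely diverge, namely the appearance of the opposite algebra $R^{op}$.

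First I would establish the stated equivalence exactly as for (i)$\Leftrightarrow$(ii): a right action turning the left module ${}_R\!N$ into a bimodule is, by the unital bimodule axioms, the same datum as a map $\alpha\in\hm_R(N\otimes R,N)$ satisfying the unit law $\alpha(n\otimes 1)=n$ together with the compatibility $\alpha(n\otimes rr')=\alpha(\alpha(n\otimes r)\otimes r')$, the latter being just $\alpha\circ(\id_N\otimes\mu)=\alpha\circ(\alpha\otimes\id_R)$ read on $N\otimes R\otimes R$. That $\alpha$ is taken in $\hm_R(N\otimes R,N)$, i.e.\ left $R$-linear, encodes precisely the bimodule compatibility of the two actions, so nothing beyond the axioms is needed here.

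Next I would translate these conditions into $\hat\alpha$, defined by $\hat\alpha(r)(n)\doteq\alpha(n\otimes r)$. Left $R$-linearity of $\alpha$ makes each $\hat\alpha(r)$ a left-module endomorphism, so $\hat\alpha(r)\in\nd_R(N)$; this is the canonical isomorphism \eqref{ii4} with both modules equal to ${}_R\!N$, namely $\hm_R(N\otimes R,N)\cong\hm(R,\nd_R(N))$. The unit law gives $\hat\alpha(1)=\id_N$. The crucial point, and the only real deviation from the proposition, is multiplicativity: evaluating the compatibility condition yields
\[\hat\alpha(rr')(n)=\alpha(n\otimes rr')=\alpha\bigl(\alpha(n\otimes r)\otimes r'\bigr)=\hat\alpha(r')\bigl(\hat\alpha(r)(n)\bigr),\]
hence $\hat\alpha(rr')=\hat\alpha(r')\circ\hat\alpha(r)$. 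This order reversal is exactly why $\hat\alpha$ is an algebra homomorphism out of $R^{op}$ rather than $R$, and I would flag it as the main thing to get right: it is the structural reason the right action lands in $\alg(R^{op},\nd_R(N))$, in contrast to the left action of the proposition which gave an honest homomorphism from $R$.

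Finally, for the free left module $N=R\otimes V$ I would extract the braiding as in part (iv). Writing $R\otimes V\otimes R=R\otimes(V\otimes R)$ as the free left module on the generating space $V\otimes R$, any left-module map is determined by its restriction to the generators, i.e.\ $\hm_R(R\otimes V\otimes R,R\otimes V)\cong\hm(V\otimes R,R\otimes V)$. Setting $\tilde\alpha(v\otimes r)\doteq\alpha(1\otimes v\otimes r)=\hat\alpha(r)(1\otimes v)$ then recovers $\alpha$ from $\tilde\alpha$ by left $R$-linearity, and the unit law forces the normalization $\tilde\alpha(v\otimes 1)=\alpha(1\otimes v\otimes 1)=1\otimes v$. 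No further obstacle arises: the associativity of $\alpha$ transports to a braid-type identity for $\tilde\alpha$ just as $\tilde\beta$ inherited its condition from $\beta$, so the verification is purely formal once the $R^{op}$ bookkeeping is settled.
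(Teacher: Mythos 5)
Your proposal is correct and follows the same route the paper intends: the corollary is stated as the mirror image of Proposition~4 (via the canonical isomorphism $\hm_R(N\otimes R,N)\cong\hm(R,\nd_R(N))$ and freeness of $R\otimes(V\otimes R)$ for the braiding), and the paper leaves the verification implicit. You correctly isolate the one genuine asymmetry, namely that $\hat\alpha(rr')=\hat\alpha(r')\circ\hat\alpha(r)$ forces $\hat\alpha\in\alg(R^{op},\nd_R(N))$, which is exactly the point the corollary records.
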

 \begin{corollary}\label{dual}
 	Let $N=M^\trr$ and $M=\,_\beta\! M_R$  be a bimodule with left module structure as in Proposition \ref{lmod} (ii). Then $N=\,_{R\!}N_\alpha$ has right module structure given by $\hat\alpha(r)=\hat\beta(r)^\trr$ i.e.,  $_R\! N_\alpha=(_\beta\! M_R)^\trr$. 
 \end{corollary}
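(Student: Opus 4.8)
The plan is to show that the two right $R$-actions on $N=M^\trr$ coincide: the one coming from the bimodule-dual construction recorded in \eqref{i9}, and the one prescribed by $\hat\alpha(r)=\hat\beta(r)^\trr$. By the preceding corollary (the left-module version of Proposition \ref{lmod}), a right structure on the left module $_R\!N$ is the same datum as a map $\hat\alpha\in\alg(R^{op},\nd_R(N))$ with $\hat\alpha(r)(\phi)=\phi.r$; so it suffices to identify $\hat\alpha(r)$ with $\hat\beta(r)^\trr$ as elements of $\hm(N,N)$ and then to verify the structural axioms. First I would note that the right dual $M^\trr$ of the bimodule $_\beta\!M_R$ is already a bimodule by the discussion following \eqref{i9}, so the only thing genuinely at stake is the explicit formula for its right action.

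The core step is a one-line pairing computation. Fix $r\in R$, $\phi\in M^\trr$ and an arbitrary test element $m\in M$. On one hand, the right action of \eqref{i9} reads
\[
<<\phi.r\,|\,m>>=\phi(r.m)=\phi(\hat\beta(r)(m))=<<\phi\,|\,\hat\beta(r)(m)>>\,,
\]
using $r.m=\hat\beta(r)(m)$ from Proposition \ref{lmod}(iii). On the other hand, the defining property \eqref{i4} of the transpose of the right-module endomorphism $\hat\beta(r)\in\nd_R(M)$ gives
\[
<<\hat\beta(r)^\trr(\phi)\,|\,m>>=<<\phi\,|\,\hat\beta(r)(m)>>\,.
\]
Since these agree for every $m\in M$, we conclude $\phi.r=\hat\beta(r)^\trr(\phi)$, that is $\hat\alpha(r)=\hat\beta(r)^\trr$, as claimed.

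It then remains to see that this $\hat\alpha$ is a legitimate right (bimodule) structure, i.e. that $\hat\alpha\in\alg(R^{op},\nd_R(N))$. Here I would invoke the functorial properties of the transpose. Unitality is immediate: $\hat\alpha(1)=\hat\beta(1)^\trr=\id_M^\trr=\id_N$ since $\hat\beta(1)=\id_M$. For multiplicativity, the transpose $(-)^\trr$ is contravariant, $(\Psi_1\circ\Psi_2)^\trr=\Psi_2^\trr\circ\Psi_1^\trr$ (again a one-line check on the pairing), so the homomorphism property $\hat\beta(r_1r_2)=\hat\beta(r_1)\circ\hat\beta(r_2)$ of Proposition \ref{lmod}(iii) transposes into $\hat\alpha(r_1r_2)=\hat\alpha(r_2)\circ\hat\alpha(r_1)$, which is exactly the $R^{op}$-algebra homomorphism condition required. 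Finally, each $\hat\beta(r)^\trr$ is left-$R$-linear by the general statement in \eqref{i4} that transposes of right-module maps are left-module maps, so $\hat\alpha(r)\in\nd_R(_R\!N)$ and the left and right actions commute.

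The computation itself is routine; the one point requiring care — and essentially the only place where anything can go wrong — is the bookkeeping of variance and sides. It is precisely the contravariance of $(-)^\trr$ that converts the left $R$-action on $M$ (an algebra homomorphism $R\to\nd_R(M)$) into a right $R$-action on $M^\trr$ (an algebra homomorphism out of $R^{op}$), and conversely the right action of \eqref{i9} had to be defined through the \emph{left} action on $M$ in order for this matching to occur. Keeping these sidedness conventions consistent is the real content; once they are fixed, the identity $\hat\alpha(r)=\hat\beta(r)^\trr$ is forced.
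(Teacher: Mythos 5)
Your proof is correct and is exactly the verification the paper leaves implicit: the corollary is stated without proof because it follows from unwinding the definitions \eqref{i9} and \eqref{i4}, which is precisely your pairing computation $<<\phi.r\,|\,m>>=\phi(\hat\beta(r)(m))=<<\hat\beta(r)^\trr(\phi)\,|\,m>>$ together with the contravariance of $(-)^\trr$ landing you in $\alg(R^{op},\nd_R(N))$ as required by the preceding corollary. Nothing to add.
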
		
 Consider the case of free right module $M_R=V\otimes R$ with more details. Its (right) dual can be identify as
 \begin{equation}\label{i5}
 	(V\otimes R)^\trr =\hm_R(V\otimes R, R_R)\cong \hm(V, R)\,
 \end{equation}
 which has a canonical left module structure: $(r.\phi)(v\otimes r')=r\phi(v\otimes 1_R)r'$, where
 $(r.\tilde\phi)(v)\doteq r\phi(v\otimes 1_R)$, $r.\tilde\phi : V\rightarrow R$. In fact, one has more general canonical isomorphism \begin{equation}\label{ii5}
 	\hm_R(V\otimes R, M_R)\cong \hm(V, M)
 \end{equation}
 since $\hm_R(R, M_R)\cong  M_R$, cf. \eqref{ii3}. 
 It means that dual of free module is not necessary free, unless some special conditions are satisfied. Instead we always have injective homomorpism of right (respectively, left \footnote{For a free left module $(R\otimes V)^\trl =\hm_R(R\otimes V,\ _R\! R)\cong \hm(V, R)$ with the right module structure $(\phi.r)(r'\otimes v)=r'\phi(1\otimes v)r$, so
 	$(\tilde\phi.r)(v)\doteq \phi(1\otimes v)r$.}) modules
 \begin{equation}\label{iii5}
 	V^\bullet\otimes R\hookrightarrow \hm(V, R)\end{equation} 
 (respectively, $R\otimes V^\bullet \hookrightarrow \hm(V, R)$).
 If at least one of the spaces ($V$ or $R$) is finite-dimensional then 
 $\hm(V, M)\cong V^\bullet\otimes M\cong M\otimes V^\bullet$. In such a case, $\hm(V, R)\cong V^\bullet\otimes R\cong R\otimes V^\bullet$ i.e., one has free dual module.  Thus free modules are reflexive in the case $\dim V<\infty$.
 \begin{corollary}\label{bdual}
 	Since $R\otimes V\cong V\otimes R$ as vector spaces they can be also isomorphic as bimodules $(R\otimes V)_\alpha\cong\ _\beta\! (V\otimes R)$.
 	For this to happen it is sufficient and enough that there is a braiding $\mathfrak{L}\in 
 	{\rm Iso}(V\otimes R, R\otimes V)$ such that the following two conditions are satisfied on $V\otimes R\otimes R$
 	\newline(1) $\mathfrak{L}\circ (\id_V \otimes\mu)=
 	\alpha\circ(\mathfrak{L}\otimes \id_R)$   and
 	\newline(2) $\mathfrak{L}^{-1}\circ (\mu \otimes\id_V)=
 	\beta\circ(\id_R \otimes \mathfrak{L}^{-1} )$ \\
 	One can observe that, particularly, the first condition is satisfied by $\mathfrak{L}=\tilde\alpha$	while the second by $\mathfrak{L}^{-1}=\tilde\beta$. Thus one obtains
 \end{corollary}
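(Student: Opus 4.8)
The plan is to exhibit the claimed isomorphism by the braiding $\mathfrak{L}$ itself and to show that the two displayed conditions are nothing more than the right- and left-$R$-linearity of $\mathfrak{L}$, read off against the free actions on each side. First I would record which action is free on which bimodule: on $_\beta\!(V\otimes R)$ the right action is the free one, $(v\otimes r).s=v\otimes rs$, i.e. the map $\id_V\otimes\mu$, while the left action is $\beta$; dually, on $(R\otimes V)_\alpha$ the left action is the free one, $s.(r\otimes v)=sr\otimes v$, i.e. $\mu\otimes\id_V$, while the right action is $\alpha$. Since a bijective $\kk$-linear map between two bimodules is a bimodule isomorphism exactly when it intertwines both the left and the right action, it suffices to test the vector-space isomorphism $\mathfrak{L}\colon\,_\beta\!(V\otimes R)\to(R\otimes V)_\alpha$ against these two actions.

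For the right action, comparing $\mathfrak{L}\big((v\otimes r).s\big)=\mathfrak{L}(v\otimes rs)$ with $\mathfrak{L}(v\otimes r).s=\alpha\big(\mathfrak{L}(v\otimes r)\otimes s\big)$ shows that right-linearity is equivalent to $\mathfrak{L}\circ(\id_V\otimes\mu)=\alpha\circ(\mathfrak{L}\otimes\id_R)$ on $V\otimes R\otimes R$, i.e. condition (1). For the left action, left-linearity reads $\mathfrak{L}\circ\beta=(\mu\otimes\id_V)\circ(\id_R\otimes\mathfrak{L})$ as maps $R\otimes V\otimes R\to R\otimes V$. The subtlety is that (2) is phrased through $\mathfrak{L}^{-1}$ rather than $\mathfrak{L}$; to bridge this I would apply $\mathfrak{L}^{-1}$ on the left to get $\beta=\mathfrak{L}^{-1}\circ(\mu\otimes\id_V)\circ(\id_R\otimes\mathfrak{L})$ and then compose on the right with $\id_R\otimes\mathfrak{L}^{-1}$, using $(\id_R\otimes\mathfrak{L})\circ(\id_R\otimes\mathfrak{L}^{-1})=\id$, which yields $\beta\circ(\id_R\otimes\mathfrak{L}^{-1})=\mathfrak{L}^{-1}\circ(\mu\otimes\id_V)$ on $R\otimes R\otimes V$, precisely condition (2). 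This composition is the only place where invertibility of $\mathfrak{L}$ is genuinely used.

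These two equivalences give both implications at once. If $\mathfrak{L}\in{\rm Iso}(V\otimes R,R\otimes V)$ satisfies (1) and (2), it is a bijective bimodule map, hence a bimodule isomorphism, so $_\beta\!(V\otimes R)\cong(R\otimes V)_\alpha$; conversely any bimodule isomorphism between these two bimodules is in particular such an $\mathfrak{L}$, and since its inverse is again a bimodule map it must satisfy (1) and (2). Finally, to justify the closing remark I would verify the two ``half'' identities directly from associativity: using the left-module law $\beta(st\otimes m)=\beta(s\otimes\beta(t\otimes m))$ of Proposition \ref{lmod}(ii) with $m=v\otimes 1$ gives $\tilde\beta(st\otimes v)=\beta\big(s\otimes\tilde\beta(t\otimes v)\big)$, so (2) holds for $\mathfrak{L}^{-1}=\tilde\beta$; symmetrically the right-module law $\alpha(n\otimes rs)=\alpha(\alpha(n\otimes r)\otimes s)$ with $n=1\otimes v$ gives $\tilde\alpha(v\otimes rs)=\alpha\big(\tilde\alpha(v\otimes r)\otimes s\big)$, so (1) holds for $\mathfrak{L}=\tilde\alpha$.

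The proof is essentially a matter of correctly matching the four actions, so the main obstacle is bookkeeping rather than depth: one must keep straight that the source is free on the right and the target free on the left, which is exactly why right-linearity is naturally expressed through $\mathfrak{L}$ (condition (1)) whereas left-linearity is naturally expressed through $\mathfrak{L}^{-1}$ (condition (2)). A second point worth flagging is that $\tilde\alpha$ and $\tilde\beta$ each satisfy only one of the two conditions and need not be invertible, so neither alone delivers the isomorphism; the actual content of the corollary is the existence of a single invertible $\mathfrak{L}$ meeting both requirements simultaneously.
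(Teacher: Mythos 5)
Your proof is correct and is precisely the argument the paper leaves implicit (the corollary is stated without a written proof): conditions (1) and (2) are exactly right- and left-$R$-linearity of $\mathfrak{L}$ read against the free action on each side, with invertibility of $\mathfrak{L}$ used only to recast left-linearity $\mathfrak{L}\circ\beta=(\mu\otimes\id_V)\circ(\id_R\otimes\mathfrak{L})$ as the stated formula in $\mathfrak{L}^{-1}$, and the closing verifications for $\tilde\alpha$ and $\tilde\beta$ follow from Proposition~\ref{lmod}(ii) and its left-module counterpart just as you say. A minor point in your favour: you correctly place condition (2) on $R\otimes R\otimes V$, whereas the paper's phrase ``on $V\otimes R\otimes R$'' is accurate only for condition (1).
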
	
 \begin{lemma}\label{L1}
 	If $\tilde\beta =\tilde\alpha^{-1}$ then  the bimodule isomorphism $(R\otimes V)_\alpha\cong\ _\beta\! (V\otimes R)$ is provided by the braiding $\tilde\alpha$.
 \end{lemma}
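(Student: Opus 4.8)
The plan is to apply Corollary~\ref{bdual} to the single braiding $\LL=\tilde\alpha$, using the hypothesis $\tilde\beta=\tilde\alpha^{-1}$ precisely to reconcile the two remarks recorded at the end of that corollary into compatible statements about one and the same map. First I would note that $\LL=\tilde\alpha\in{\rm Iso}(V\otimes R,R\otimes V)$ is a linear isomorphism with $\LL^{-1}=\tilde\beta$ exactly by the assumed relation, and that the normalizations $\tilde\alpha(v\otimes 1)=1\otimes v$ and $\tilde\beta(1\otimes v)=v\otimes 1$ are then mutually consistent, as $\LL$ and $\LL^{-1}$ must agree on these generators.

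Second I would unpack conditions (1) and (2) of Corollary~\ref{bdual} as module-map statements. Condition (1), $\LL\circ(\id_V\otimes\mu)=\alpha\circ(\LL\otimes\id_R)$, is nothing but the requirement that $\tilde\alpha$ intertwine the free right action on ${}_\beta\!(V\otimes R)$ with the $\alpha$-action on $(R\otimes V)_\alpha$; it holds by the right-module identity $\alpha(n\otimes rr')=\alpha(\alpha(n\otimes r)\otimes r')$ together with $\tilde\alpha(v\otimes r)=\alpha((1\otimes v)\otimes r)$. Dually, condition (2), $\LL^{-1}\circ(\mu\otimes\id_V)=\beta\circ(\id_R\otimes\LL^{-1})$, says that $\tilde\beta$ intertwines the free left action on $(R\otimes V)_\alpha$ with the $\beta$-action on ${}_\beta\!(V\otimes R)$; it holds by the left-module identity $\beta(rr'\otimes m)=\beta(r\otimes\beta(r'\otimes m))$ of Proposition~\ref{lmod}(ii) applied to $m=v\otimes 1$, after which $\tilde\beta(rr'\otimes v)=\beta(r\otimes\tilde\beta(r'\otimes v))$ follows.

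Third, and this is essentially the only substantive point, I would argue that these two "one-step" identities already deliver a full bimodule isomorphism. Because $\LL$ is bijective, a right-module map automatically has a right-module inverse and a left-module map a left-module one; hence condition (1) makes both $\LL$ and $\LL^{-1}$ right $R$-linear, while condition (2) makes both $\LL^{-1}$ and $\LL$ left $R$-linear. Thus $\LL=\tilde\alpha$ is simultaneously left and right linear, i.e. a bimodule isomorphism, and Corollary~\ref{bdual} yields $(R\otimes V)_\alpha\cong{}_\beta\!(V\otimes R)$. The main obstacle to keep in view is merely that Corollary~\ref{bdual} exhibits condition (1) for $\tilde\alpha$ and condition (2) for $\tilde\beta$, which are a priori two distinct braidings; the hypothesis $\tilde\beta=\tilde\alpha^{-1}$ collapses them into $\LL$ and $\LL^{-1}$, so that a single invertible braiding satisfies both conditions and no further associativity or normalization verification is required.
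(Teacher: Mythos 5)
Your proposal is correct and follows essentially the same route as the paper: Corollary~\ref{bdual} already records that condition (1) holds for $\mathfrak{L}=\tilde\alpha$ and condition (2) for $\mathfrak{L}^{-1}=\tilde\beta$, and the lemma's hypothesis $\tilde\beta=\tilde\alpha^{-1}$ is exactly what identifies these as one invertible braiding satisfying both, whence the bimodule isomorphism. Your added observation that bijectivity transfers right-linearity from $\mathfrak{L}$ to $\mathfrak{L}^{-1}$ and left-linearity from $\mathfrak{L}^{-1}$ to $\mathfrak{L}$ is a correct and slightly more explicit spelling-out of why the two one-sided conditions combine into a two-sided isomorphism.
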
	
 \section{Quantum calculi in terms of differential one-forms}
 Let $\bar R\doteq R/\kk$ be a quotient vector space and denote by $D:R\rightarrow\bar R$ the canonical surjection $r\mapsto\bar r$, where $\bar r$ denotes the corresponding equivalence class ($r\sim r'$ $\Leftrightarrow$ $r-r'\in 1 \kk$) and $1\kk=\ker D_0$. Consequently, $\bar 1=0$. It is related with the short exact sequence of vector spaces
 \begin{equation}\label{f1}
 	0\longrightarrow \kk\stackrel{1}{\longrightarrow} R\stackrel{D}{\longrightarrow} \bar R\longrightarrow 0
 \end{equation}
 which splits: $R\cong \bar R\oplus \kk$ as a vector space; $ \mbox{codim}\bar R=1$.  
 \begin{remark}\label{}
 	This splitting is, of course, up to some isomorphism which depends upon a splitting injective map $D_0: \bar R \rightarrow R$ (some choice of representative in each equivalence class) such that $D\circ D_0=\id_{\bar R}$.
 	In some special case, not assumed here, $\bar R$ might be a nonunital subalgebra  of $R$.
 \end{remark}	
 Applying the functor $\otimes R$ one comes to new exact sequence
 \begin{equation}\label{f2}
 	0\longrightarrow R\stackrel{1\otimes\id_R}{\longrightarrow} R\otimes R\stackrel{D\otimes \id_R}{\longrightarrow} \bar R\otimes R\longrightarrow 0
 \end{equation} 
 which splits  $R\otimes R\cong R\oplus (\bar R\otimes R)$. It is not difficult to see that \eqref{f2} is at the same time an exact sequence of right modules and one obtains the splitting of right modules. The splitting map for $1\otimes\id_R$ is the multiplication
 map $R\otimes R\stackrel{\mu}{\rightarrow}R$.  An injective map $\mathfrak{I}: \bar R\otimes R\rightarrow R\otimes R$ defined as $\mathfrak{I}(\bar r\otimes s)\doteq r\otimes s - 1\otimes rs$ \footnote{Here, the right hand side does not depend on the choice of representative $r\in\bar r$ i.e., it is invariant with respect to $r\mapsto r+\kk$. This has to be always the case when one deals with the variable $\bar r$.} does satisfy the property $(D\otimes\id_R)\circ \mathfrak{I} =\id_{\bar R\otimes R}$. 
 The corresponding splitting of right modules takes the form
 \begin{equation}\label{f2a}
 	R\otimes R \equiv {\rm Im}\mathfrak{I} \oplus R\,,
 \end{equation}
 i.e. $r\otimes s=(r\otimes s - 1\otimes rs)+1\otimes rs$.
 Therefore, the last sequence of right modules can be reversed to provide an exact sequence of right modules (${\rm Im}\mathfrak{I}=\ker\mu$)
 \begin{equation}\label{f3}
 	0\longrightarrow \bar R\otimes R\stackrel{\mathfrak{I}}{\longrightarrow}R\otimes R\stackrel{\mu}{\longrightarrow}
 	R\longrightarrow 0\,
 \end{equation} 
 which is usually a starting point for considering a universal differential calculus and its relation with the graded universal differential algebra, Hochschild homology, cyclic cohomology, etc., see, e.g. \cite{Connes,Karoubi,Cuntz,Quillen}.
 \begin{remark}\label{rr1}
 	In fact, ${\rm Im}\mathfrak{I}=\ker\mu$ is a subbimodule of $R\otimes R$. Thus $_\chi\! (\bar R\otimes R)\stackrel{\mathfrak{I}}{\longrightarrow}\ker\mu$ is a bimodule isomorphism if we 
 	set the canonical left braiding  (cf. Corollary \ref{bdual}) map $\tilde\chi (r\otimes\bar s)\equiv r.(\bar s\otimes 1)\doteq \overline{rs}\otimes 1- \bar r\otimes s$ on $ R\otimes\bar  R$. However, \eqref{f3}  is not a left module  splitting. The reason is that the first arrow in \eqref{f2} is not a left module map. 
 \end{remark}
 \begin{remark}\label{rr2}
 	By the same token,  applying $R\otimes $ to \eqref{f1} one obtains new exact sequence of left modules
 	\begin{equation}\label{f2b}
 		0\longrightarrow R\stackrel{\id_R\otimes 1}{\longrightarrow} R\otimes R\stackrel{\id_R\otimes D}{\longrightarrow}  R\otimes \bar R\longrightarrow 0\,,
 	\end{equation} 
 	which splits and  generates  
 	\begin{equation}\label{f4}
 		0\longrightarrow  R\otimes \bar R\stackrel{\mathfrak{T} }{\longrightarrow}R\otimes R\stackrel{\mu}{\longrightarrow}
 		R\longrightarrow 0\,
 	\end{equation}
 	which splits as well, where $\mathfrak{T}( r\otimes \bar t)\doteq rt\otimes 1-r\otimes t$. Again, defining  a right module structure (right canonical braiding)
 	$\tilde\gamma (\bar r\otimes s)\equiv (1\otimes \bar r).s\doteq 1\otimes\overline{rs}- r\otimes \bar s$ on $ R\otimes \bar R$ we make it a subbimodule   ${\rm Im}\mathfrak{T}\cong \ker\mu\subset R\otimes R$. However, by similar reasons, the splittings \eqref{f2b}, \eqref{f4} are left module splittings.  
 \end{remark} 
 \begin{remark}\label{dR}
 	The left $\tilde\chi$ and right $\tilde\gamma$ canonical braidings are mutually inverse to each other. Therefore, according to the Lemma  \eqref{L1} bimodules $(R\otimes \bar R)_\gamma\cong\ _\chi\! (\bar R\otimes R)$ are isomorphic.
 \end{remark}	
 Note that $\ker\mu=\{\sum_{i=1}^{n} r_i\otimes r'_i\in R\otimes R| \sum_{i=1}^{n} r_i r'_i=0, n\in\mathbb{N} \}$ is a subbimodule of the bifree bimodule $R\otimes R$ one obtains the exact sequence of bimodules
 \begin{equation}\label{f5}
 	0\longrightarrow  \ker \mu \stackrel{\mathfrak{C}}{\longrightarrow} R\otimes R\stackrel{\mu}{\longrightarrow}
 	R\longrightarrow 0\,,
 \end{equation}
 where $\mathfrak{C}$ is the inclusion map. 
 \begin{proposition}\label{}
 	This sequence admits left (respectively, right) module splitting depending upon a choice of $\mu_0$. 
 	There is no bimodule splitting of \eqref{f5}.
 \end{proposition}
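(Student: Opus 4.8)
The plan is to read a splitting of \eqref{f5} as a section of the multiplication $\mu$ in the appropriate category: a left (resp. right, resp. bi-) module splitting is a left (resp. right, resp. bi-) module map $\mu_0\colon R\to R\otimes R$ with $\mu\circ\mu_0=\id_R$, whence $R\otimes R\cong\ker\mu\oplus\mu_0(R)$ with complementary projector $P=\id-\mu_0\circ\mu$ onto $\ker\mu$. So the first and second assertions become, respectively, the existence of a one-sided section and the non-existence of a two-sided one.

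For the one-sided statement I would simply exhibit explicit sections. The assignment $\mu_0^L(r)=r\otimes 1$ is a left $R$-module map, since $s\cdot(r\otimes1)=sr\otimes1=\mu_0^L(sr)$, and satisfies $\mu(\mu_0^L(r))=r$; the associated projector $P^L(a\otimes b)=a\otimes b-ab\otimes1$ lands in $\ker\mu$ and reproduces, up to sign, the splitting already recorded in \eqref{f4} through $\mathfrak{T}$, identifying $\ker\mu\cong R\otimes\bar R$. Symmetrically $\mu_0^R(r)=1\otimes r$ is a right-module section, giving \eqref{f3} and $\ker\mu\cong\bar R\otimes R$. More generally a left-module section is determined by the single element $e_0=\mu_0(1)$, which may be any member of $\mu^{-1}(1)=1\otimes1+\ker\mu$, via $\mu_0(r)=r\cdot e_0$; this is the precise content of the phrase "depending upon a choice of $\mu_0$" (equivalently, on the choice of representative map $D_0$). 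Existence is guaranteed abstractly because $R$ is free, hence projective, of rank one as a one-sided module, so the epimorphism $\mu$ onto it splits one-sidedly.

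The heart of the proposition, and the main obstacle, is the non-existence of a \emph{bimodule} section. I would argue by contradiction: suppose $s\colon R\to R\otimes R$ is a bimodule map with $\mu\circ s=\id_R$ and set $e\doteq s(1)$. Since $s$ is simultaneously a left and a right module map, $s(r)=s(r\cdot1)=r\cdot e$ and $s(r)=s(1\cdot r)=e\cdot r$, so $r\cdot e=e\cdot r$ for all $r\in R$ while $\mu(e)=1$; that is, $e$ is a separability idempotent for $R$. Writing $e=\sum_{i=1}^n a_i\otimes b_i$ with $n$ minimal, so that $\{a_i\}$ and $\{b_i\}$ are each linearly independent, the centrality relation reads $\sum_i ra_i\otimes b_i=\sum_i a_i\otimes b_i r$ in $R\otimes R$, and comparing components against these independent families forces $ra_i\in\mathrm{span}(a_1,\dots,a_n)$ and $b_ir\in\mathrm{span}(b_1,\dots,b_n)$ for every $r\in R$.

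The contradiction then comes from $\mu(e)=1$: for arbitrary $r$ we have $r=\mu(e\cdot r)=\sum_i a_i(b_i r)$, and since each $b_i r$ is a $\kk$-linear combination of $b_1,\dots,b_n$, this exhibits $r$ inside the finite-dimensional space $\mathrm{span}\{a_i b_j : 1\le i,j\le n\}$. Hence $R$ would be finite-dimensional of dimension at most $n^2$, contradicting the infinite-dimensionality of $R$ in the present setting; equivalently, such an $e$ exists precisely when $R$ is a separable $\kk$-algebra, which is then necessarily finite-dimensional and is excluded here. Thus no bimodule splitting of \eqref{f5} exists. The only genuinely delicate step is isolating the separability element $e$ and ruling it out; the rest is the routine bookkeeping of sections and projectors. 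This asymmetry — projective on each side separately, yet never as a bimodule unless $R$ is separable — is exactly the functorial failure anticipated in the introduction.
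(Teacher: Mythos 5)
Your proof is correct, and for the second (and harder) assertion it takes a genuinely different and more complete route than the paper. The paper's own argument for the non-existence of a bimodule splitting only examines the single candidate section $\mu_0(r)=r\otimes 1$, i.e.\ the choice $\mu_0(1)=1\otimes 1$, and observes that it fails to be a right module map; it never addresses the other possible sections $\mu_0(1)=e\in 1\otimes 1+\ker\mu$, so as written it is incomplete. You close exactly that gap: a bimodule section is determined by $e=\mu_0(1)$ satisfying $r.e=e.r$ and $\mu(e)=1$, i.e.\ a separability element, and your minimal-length argument showing $R=\mathrm{span}\{a_ib_j\}$ is the standard (and correct) proof that such an element forces $\dim_\kk R\le n^2$. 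What your approach buys, beyond rigour, is a precise delimitation of the proposition's scope: the claim ``there is no bimodule splitting'' is actually \emph{false} for separable $\kk$-algebras (e.g.\ $M_n(\kk)$, or finite products thereof over the algebraically closed $\kk$), and holds exactly when $R$ is not separable --- in particular whenever $R$ is infinite-dimensional, which is the situation the paper has in mind but does not formally impose. Your treatment of the first assertion (explicit one-sided sections $r\otimes 1$ and $1\otimes r$, the induced projectors recovering $\mathfrak{T}$ and $\mathfrak{I}$ from \eqref{f4} and \eqref{f3}, and the parametrization of all left sections by $e_0\in\mu^{-1}(1)$ explaining ``depending upon a choice of $\mu_0$'') agrees in substance with the paper and is, if anything, more explicit. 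The only point I would press you on is to state the non-separability (or infinite-dimensionality) hypothesis up front rather than invoking it at the end, since the proposition is literally false without it.
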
 
 \begin{proof} 
 	The sequence \eqref{f5} splits since $R\otimes R= \ker\mu \oplus R$. In order to obtain right/left/bi-module splitting
 	one needs that splitting maps $\mu_0: R\rightarrow R\otimes R$ and $\mathfrak{C}_0: R\otimes R\rightarrow \ker\mu$ satisfying $\mu_0\circ \mu=\id_R$, $\mathfrak{C}\circ\mathfrak{C}_0=\id_{\ker\mu}$ share the same properties.
 	If $\mu_0(r)=r\mu_0(1)=r\otimes 1$ is left module map then it cannot be at the same time a right module (bimodule) map and
 	$r\otimes 1\notin \ker\mu$.  Consequently, 
 	$\mathfrak{C}\circ\mathfrak{C}_0(r\otimes r') =\mathfrak{C}\circ\mathfrak{C}_0(r\otimes r'-rr'\otimes 1+ rr'\otimes 1')=r\otimes r'-rr'\otimes 1\in\ker\mu$. Since ${\rm Im}\mathfrak{C}\circ\mathfrak{C}_0$ is a left submodule then $\mathfrak{C}_0$ has to be left module map. \end{proof}
 We are now in position to recall the well-known definition.
 \begin{definition}\label{d1} (see, e.g. Woronowicz \cite{Woronowicz,Bourbaki})
 	A \emph{First Order Differential Calculus (FOC in short)} over $R$ is a
 	pair $(d,\Omega)$ where:
 	\newline(1)
 	$\Omega$ is an $R$--bimodule,
 	\newline(2)
 	$d\colon{R}\longrightarrow\Omega$ is a linear map called differential satisfying the Leibniz rule i.e., $d(rs)=d(r).s+r.d(s)$ for any $r,s,\in{R}$, and
 	\newline(3)
 	$\Omega$ is generated, as $R$--bimodule, by elements $\{d(r)\colon\;r\in{R}\}$.\\  Elements of $\Omega$ are called \emph{1-forms}.
 \end{definition}
 \begin{remark}
 	$\Omega$ is also generated, as left or
 	right $R$--module, by the elements $\{d(a)\colon\;a\in{R}\}$ since for any $a$, $b\in{R}$ we
 	have the identity $d(a)b=d(ab)-ad(b)$. Clearly, $\ker d\supset 1\kk$. If $\ker d = 1\kk$
 	then the corresponding FOC is called connected \cite{BeggsMajid}.
 	It is also known that FOC extends to a graded differential algebra with the property $d^2=0$, which is not discussed here (see, e.g. \cite{Bourbaki,bk95a,Cuntz}).
 \end{remark}
 We recall that a morphism between two FOC $d_1:R\rightarrow\Omega_1$ and  $d_2:R\rightarrow \Omega_2$ is a bimodule map $\mathfrak{L} \in \hm_{(R,R)}(\Omega_1,\  \Omega_2)$ such that $d_2=\mathfrak{L}\circ d_1\,$.
 \begin{proposition}\label{uFOC} The following first order calculi are mutually isomorphic
 	\newline(1) $ R\stackrel{D\otimes 1}{\longrightarrow}\ _\chi\! (\bar R\otimes R)$\,,
 	\newline(2) $ R\stackrel{1\otimes D}{\longrightarrow} (R\otimes \bar R)_\gamma$\,,
 	\newline(3)  $ R\stackrel{d_\mu}{\longrightarrow} \ker\mu$, where $d_\mu r=r\otimes 1-1\otimes r$\,.
 	\begin{proof}   
 		We already know that  the bimodules  (1) and (2) are isomorphic by the canonical braiding 
 		$\tilde\gamma (\bar r\otimes s)=  1\otimes\overline{rs}-r\otimes \bar s$ with the inverse
 		$\tilde\chi (r\otimes \bar s)= \overline{rs}\otimes 1-\bar r\otimes s$.  Since $\bar r\equiv Dr$ then   $\tilde\gamma (D r\otimes s)= 1\otimes Dr$. Similarly, $\mathfrak{T}\circ\tilde\gamma$ provides an isomorphism between (1) and (3). 
 	\end{proof}
 \end{proposition}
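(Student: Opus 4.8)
The plan is to establish only the two isomorphisms $(1)\cong(2)$ and $(1)\cong(3)$ and then invoke transitivity, since the composite of two FODC-morphisms is again an FODC-morphism; mutual isomorphism of all three then follows at once. Recall from the paragraph preceding Definition~\ref{d1} that an isomorphism of calculi $d_1\colon R\to\Omega_1$ and $d_2\colon R\to\Omega_2$ is an invertible bimodule map $\mathfrak{L}\in\hm_{(R,R)}(\Omega_1,\Omega_2)$ with $d_2=\mathfrak{L}\circ d_1$. So in each case I must exhibit a bimodule isomorphism of the underlying bimodules and then verify the single intertwining identity on generators. The Leibniz rule and bimodule-generation demanded by Definition~\ref{d1} transport automatically along any such $\mathfrak{L}$, so it is enough to know that one representative — say $\ker\mu$ with $d_\mu$, the universal calculus — is genuinely an FODC.

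For $(1)\cong(2)$ the bimodule isomorphism is already available: by Remark~\ref{dR} the canonical braidings $\tilde\chi$ and $\tilde\gamma$ are mutually inverse, so Lemma~\ref{L1} (applied with $V=\bar R$, $\beta=\chi$, $\alpha=\gamma$, via Corollary~\ref{bdual}) gives that $\tilde\gamma$ realizes the bimodule isomorphism ${}_\chi(\bar R\otimes R)\cong(R\otimes\bar R)_\gamma$. It then remains to check $\tilde\gamma\circ(D\otimes 1)=1\otimes D$, which I would do by evaluating on a generator: since $(D\otimes 1)(r)=\bar r\otimes 1$ and $\tilde\gamma(\bar r\otimes s)=1\otimes\overline{rs}-r\otimes\bar s$, putting $s=1$ and using $\bar 1=0$ kills the second term and yields $\tilde\gamma(\bar r\otimes 1)=1\otimes\bar r=(1\otimes D)(r)$, as wanted.

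For $(1)\cong(3)$ I would compose $\tilde\gamma$ with the map $\mathfrak{T}$ from the construction around \eqref{f4}, which identifies $(R\otimes\bar R)_\gamma$ with the subbimodule ${\rm Im}\,\mathfrak{T}=\ker\mu\subset R\otimes R$ as bimodules; the composite $\mathfrak{T}\circ\tilde\gamma$ is then a bimodule isomorphism from ${}_\chi(\bar R\otimes R)$ onto $\ker\mu$. Evaluating again on a generator, $(\mathfrak{T}\circ\tilde\gamma)(\bar r\otimes 1)=\mathfrak{T}(1\otimes\bar r)=r\otimes 1-1\otimes r=d_\mu r$, so $(\mathfrak{T}\circ\tilde\gamma)\circ(D\otimes 1)=d_\mu$ and the intertwining holds.

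The substantive content has in fact been discharged earlier: the genuinely nontrivial inputs are the mutual-inverseness of $\tilde\chi,\tilde\gamma$ (Remark~\ref{dR}) and the bimodule identification ${\rm Im}\,\mathfrak{T}\cong\ker\mu$ around \eqref{f4}. Given these, the only fresh work is the two one-line generator evaluations above, both painless because the choice $s=1$ annihilates the $\bar 1$-terms. The one point I would treat as the potential obstacle is purely bookkeeping: confirming that each evaluation is carried out with respect to the correct \emph{braided} bimodule structure, so that $\mathfrak{T}\circ\tilde\gamma$ is an isomorphism on both the left and the right simultaneously. Since $\tilde\gamma$ and $\mathfrak{T}$ are bimodule maps by their construction, no separate left/right verification is needed beyond what Lemma~\ref{L1} and \eqref{f4} already guarantee.
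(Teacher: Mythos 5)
Your proposal is correct and follows essentially the same route as the paper: the braiding $\tilde\gamma$ (with inverse $\tilde\chi$) gives $(1)\cong(2)$, and the composite $\mathfrak{T}\circ\tilde\gamma$ gives $(1)\cong(3)$, with the intertwining identities checked on generators by setting $s=1$ so that $\bar 1=0$ kills the extra term. Your write-up is in fact slightly more careful than the paper's, which abbreviates the generator computation as $\tilde\gamma(Dr\otimes s)=1\otimes Dr$ where the evaluation at $s=1$ is left implicit.
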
 
 \begin{example}\label{}
 	(Inner differential) $\Omega$ is any $R$--bimodule, and $\omega\in \Omega$ any element.
 	Thus $d_\omega (a):=a.\omega-\omega.a\equiv [a\,,\omega]$ defines FOC.
 	In this sense the differential $d_\mu r= [r, 1\otimes 1]$ is "outer" since $1\otimes 1\notin \ker(\mu)$.\end{example}
 A fundamental meaning of FOC from the Proposition \eqref{uFOC} is provided by the following
 universality  property:
 \begin{proposition}\label{univ} (see, e.g. \cite{Woronowicz,Bourbaki})\ 
 	The following statements are equivalent:
 	\begin{enumerate}
 		\item[(a)]  $d:R\rightarrow \Omega$ is FOC,
 		\item[(b)] There is a surjective bimodule map (bimodule epimorphism) $\pi: \ker\mu \rightarrow \Omega$ such that $dr=\pi (d_\mu r)$, 
 		\item[(b)] There exist a subbimodule $\bar\Omega\subset\ker\mu$ such that the following short exact sequence of bimodules is exact
 		\begin{equation}\label{univ1}
 			0\longrightarrow \bar\Omega\longrightarrow\ker\mu  \stackrel{\pi}{\longrightarrow}  \Omega\longrightarrow 0\,.
 		\end{equation}
 	\end{enumerate}	 
 \end{proposition}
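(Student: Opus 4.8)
The plan is to treat $(\ker\mu,\,d_\mu)$ as the universal FODC and exhibit every FODC as a bimodule quotient of it. I would prove the cycle (a)$\Rightarrow$(b)$\Rightarrow$(a) and dispatch the (essentially tautological) equivalence (b)$\Leftrightarrow$(c) separately. The one computational input I rely on throughout is that $\ker\mu$ is generated, already as a left (and symmetrically as a right) $R$-module, by the elements $d_\mu r$: for $\omega=\sum_i a_i\otimes b_i\in\ker\mu$ one has $\sum_i a_i b_i=0$, whence $\sum_i a_i\cdot d_\mu(b_i)=\sum_i(a_ib_i\otimes 1-a_i\otimes b_i)=-\omega$, so $\omega=-\sum_i a_i\cdot d_\mu(b_i)$.

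For (a)$\Rightarrow$(b) I would sidestep any well-definedness worry by building $\pi$ as the common restriction of two \emph{globally} defined $\kk$-linear maps on the bifree bimodule $R\otimes R$. Since $\otimes=\otimes_\kk$, the assignments
\[ \lambda(a\otimes b)=a\,db\,, \qquad \rho(a\otimes b)=(da)\,b \]
are unambiguous $\kk$-linear maps $R\otimes R\to\Omega$. A direct check shows $\lambda$ is left $R$-linear (but not right) and $\rho$ is right $R$-linear (but not left) on all of $R\otimes R$. The crux is that the two coincide on $\ker\mu$: for $\omega=\sum_i a_i\otimes b_i$ with $\sum_i a_ib_i=0$, the Leibniz rule gives
\[ \sum_i (da_i)b_i=\sum_i\big(d(a_ib_i)-a_i\,db_i\big)=d\Big(\sum_i a_ib_i\Big)-\sum_i a_i\,db_i=-\sum_i a_i\,db_i\,, \]
i.e. $\rho|_{\ker\mu}=-\lambda|_{\ker\mu}$. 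Hence $\pi\doteq\rho|_{\ker\mu}$ is simultaneously right $R$-linear (being $\rho|_{\ker\mu}$) and left $R$-linear (being $-\lambda|_{\ker\mu}$), so $\pi\in\hm_{(R,R)}(\ker\mu,\Omega)$. Evaluating on generators, $\pi(d_\mu r)=(dr)\cdot 1-(d1)\cdot r=dr$ (using $d1=0$), which is the asserted relation; and $\pi$ is surjective because ${\rm Im}\,\pi$ is a subbimodule of $\Omega$ containing every $dr$, and these generate $\Omega$ by Definition \ref{d1}(3).

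For (b)$\Rightarrow$(a) I would transport structure along $\pi$. Linearity of $d=\pi\circ d_\mu$ is immediate; the Leibniz rule follows because $d_\mu$ already obeys it on $\ker\mu$ — the one-line identity $d_\mu(rs)=(d_\mu r)\cdot s+r\cdot(d_\mu s)$ — and $\pi$ is a bimodule map, so $d(rs)=\pi(d_\mu(rs))=\pi(d_\mu r)\,s+r\,\pi(d_\mu s)=(dr)s+r(ds)$; generation of $\Omega$ by the $dr$ follows from surjectivity of $\pi$ together with the generation statement for $\ker\mu$ recorded above. Finally, (b)$\Leftrightarrow$(c) is formal: given $\pi$ as in (b), set $\bar\Omega=\ker\pi$, a subbimodule of $\ker\mu$, and exactness of $0\longrightarrow\bar\Omega\longrightarrow\ker\mu\stackrel{\pi}{\longrightarrow}\Omega\longrightarrow 0$ is precisely surjectivity of $\pi$ together with $\bar\Omega=\ker\pi$; conversely such a sequence returns a surjective bimodule map. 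The main (and essentially only) obstacle is the one dealt with in the second paragraph: the naive formula $\pi(\sum_i a_i\otimes b_i)=\sum_i a_i\,db_i$ looks both representation-dependent and only one-sidedly linear, and the clean remedy is to recognize it as the restriction to $\ker\mu$ of the globally defined maps $\lambda,\rho$, whose coincidence there is forced by the Leibniz rule.
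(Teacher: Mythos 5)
Your proposal is correct and follows essentially the same route as the paper: the same map $\pi\bigl(\sum_i r_i\otimes s_i\bigr)=\sum_i (dr_i)s_i$ on $\ker\mu$, with (b)$\Leftrightarrow$(c) disposed of by setting $\bar\Omega=\ker\pi$. Your $\lambda/\rho$ device merely makes explicit the well-definedness and two-sided $R$-linearity that the paper compresses into ``it turns out to be a bimodule map'' --- and, incidentally, your sign convention is the one consistent with $dr=\pi(d_\mu r)$, whereas the paper's displayed definition of $\pi$ yields $-dr$.
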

 \begin{proof} We follow the scheme: (a)$\Rightarrow$(b)$\Rightarrow$ (c)$\Rightarrow$(a).\\
 	(a)$\Rightarrow$(b)		If $\sum_{i}r_i\,s_i=0$ then $\sum_{i}r_i\otimes{s_i}=-\sum_{i}d_\mu\,(r_i).{s_i}=\sum_{i}r_i.
 	d_\mu s_i\,$, where 
 	$d_\mu r_i= r_i\otimes 1 - 1\otimes r_i$, and one defines
 	$\pi(\sum_{i}r_i\otimes{s_i})\doteq	-\sum_{i}d\,(r_i).{s_i}=\sum_{i} r_i.d(s_i)$. It turns out to be a bimodule map. Let now $\omega=\sum_{i}d_\mu\,(r_i).{s_i}$ be any element of $\Omega$. Then $\tilde\omega=\sum_{i}(r_i\otimes{s_i}-1\otimes r_is_i)\in\ker\mu$ and $\pi(\tilde\omega)=\omega$. \\
 	(b)$\Rightarrow$(c) Define $\bar\Omega\doteq\ker\pi$ then $\Omega\cong \ker\mu/\ker\pi$ and one has the short exact sequence obvious.\\
 	(c)$\Rightarrow$(a) Finally, defining $d\doteq\pi\circ d_\mu$ we close the proof.
 \end{proof}  
 \begin{remark}\label{qFOC}  
 	This  sequence does not split, in general, neither by the left nor by the right modules. Of course, some splitting linear maps $\pi_0: \omega \rightarrow \ker\mu$ exist  but not necessary as module maps.		
 \end{remark}
 \begin{example}\label{standard}	 The standard exterior Cartan calculs on a smooth manifold $\mathcal{M}$ with the commutative algebra $R=C^\infty(\mathcal{M})$ can not be obtained in this way. Instead, it is obtained by the quotient of a universal K\"ahler differential $d_K:R\rightarrow \Omega^1_K(R)$, which is universal in a class of calculi with the left and right module structures the same.  The universal K\"ahler differential always exists for commutative rings, see, e.g. \cite{Kunz}.\end{example}
 We would like to mention that modules over commutative algebras are generalized by the so-called central bimodules \cite{Michor96,Conti22}. Bimodules we are considering here are not, in general, central. As vector spaces they are "central" over the field $\kk$ only.\footnote{For central bimodules the multiplications by elements from the center $\mathfrak{Z}(R)$ coincide on both sides. Duals of central bimodules are again central.}
 \section{Cartan pairs as  noncommutative vector fields}
 We recall some definitions and properties announced in \cite{AB96,AB97}. We were focused on the algebraic construction of noncommutative analogue of vector fields originally called Cartan pairs.
 For this purpose, let us consider the vector space $\nd(R)$ as a bimodule with the following structure 
 $(r.A.r')(s)\doteq rA(s)r'$ \footnote{Some other possibilities will be discussed in the next section.}. It contains a subbimodule $\nd_0(R)\doteq \{A\in \nd(R)| A(1)=0\}$.
 \begin{definition}\label{rCp} 
 	A  \textit{right Cartan pair} over $R$ is a pair $(\X, \rtt)$ where:
 	\newline(1) $\X$ is an $R$--bimodule,
 	\newline(2)
 	$\rtt:\X\ni X \longrightarrow X^\rtt \in \nd (R)$   left module map, called an action 
 	\begin{equation}\label{rc1}
 		(r.X)^\rtt (s)=r X^\rtt (s)
 	\end{equation}   
 	(3) satisfying  the following relations
 	\begin{equation}\label{rc2}
 		X^\rtt (rs)= X^\rtt (r)s + (X.r)^\rtt (s)
 	\end{equation}
 	\newline(4) The map $\rtt$ is injective i.e., $\ker\rtt=0$.	 
 \end{definition}
 \begin{remark}\label{blad}  In \cite{AB96} the important condition (4) (cf. \cite{Llena03}) was optional and, unfortunately, not assumed in \cite{AB97}. One can observe that, in fact, $X^\rtt\in \nd_0(R)$, since $X^\rtt(1^2)=2X^\rtt(1)$ implies $X^\rtt(1)=0$. Thus to each right Cartan pair one can associate the short exact sequence of left modules
 	\begin{equation}\label{key3}
 		0\longrightarrow \X \stackrel{\rtt}{\longrightarrow} \nd_0(R)
 		\stackrel{\tilde\pi}{\longrightarrow}\tilde\X\longrightarrow0\,,
 	\end{equation}
 	which in general does not split, here $\tilde\X$ is a factor module 
 	$\nd_0(R)/\X$.	We will see in the next section that this sequence can be realized as a sequence of bimodules.
 \end{remark}
 A morphism of right Cartan pairs  $(\X_1, \rtt_1)$ and $(\X_2, \rtt_2)$ is a bimodule map $\Psi: \X_1\rightarrow \X_2$ such that $\rtt_2\circ\Psi=\rtt_1$.
 \begin{remark}\label{lCp} (Left Cartan pair)\ 
 	In similar manner one can define a left Cartan  pair $(\Y, \ltt)$ with an injective (right module action) map 
 	$\ltt:\Y\ni Y \longrightarrow Y^\ltt \in \nd_0 (R)$
 	\[
 	(Y.r)^\ltt (s)\ =\  Y^\ltt (s)r \]
 	and
 	\[Y^\ltt (r\, s)\ =r\,Y^\ltt (s) + (s.Y)^\ltt (r)\,, \]	  
 	or more naturally, in the inverse order, as acting on the left:  $(r\, s)^\ltt\!Y\ =r\, (s)^\ltt\!Y + (r)^\ltt\!(s.Y) $.
 \end{remark}
 \begin{theorem}\label{} To each FOC $(d, \Omega)$ one can associate
 	a well-defined right (respectively, left) Cartan pair $(\Omega^\trr,\p)$
 	(respectively, $(\Omega^\trl, )$), where $X^\p(r)=<X|dr>$ for any $X\in \Omega^\trr, r\in R$ (respectively, $Y^\eth(r)=<dr|Y>$ for any $Y\in \Omega^\trl, r\in R$).\\
 	If $\mathfrak{L} \in \hm_{(R,R)}(\Omega_1,\Omega_2)$ is a morphism  (i.e., $ d_2=\mathfrak{L}\circ d_1\,$) then its transpose $\mathfrak{L}^\trr \in \hm_{(R,R)}(\Omega^\trr_2,\Omega^\trr_1)$ is a morphism of right (respectively, left) Cartan pairs i.e.,  
 	$ \p_2=\p_1\circ\mathfrak{L}^\trr\,$. Moreover, if $\mathfrak{L}$ is an epimorphism then $\mathfrak{L}^\trr$ is a monomorphism.
 \end{theorem}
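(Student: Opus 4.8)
The plan is to verify directly that $(\Omega^\trr,\p)$ meets the four requirements of Definition~\ref{rCp}, and then to read off functoriality from the behaviour of the transpose under the bimodule duality recalled in \eqref{i4} and \eqref{i9}. First I would note that, since $\Omega$ is a bimodule, its right dual $\Omega^\trr=\hm_R(\Omega,R_R)$ is again a bimodule by \eqref{i9}, with left action $(r.X)(\omega)=r\,X(\omega)$ as in \eqref{i3a} and right action $(X.r)(\omega)=X(r.\omega)$; this gives requirement~(1). The map $X^\p=X\circ d$ is $\kk$-linear as a composite of $\kk$-linear maps, so $X^\p\in\nd(R)$, and from $d(1)=d(1\cdot 1)=2\,d(1)$ we get $d(1)=0$, whence $X^\p(1)=X(d1)=0$ and in fact $X^\p\in\nd_0(R)$, as already observed in Remark~\ref{blad}. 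Requirement~(2), that $\p$ be a left module map, is then immediate: $(r.X)^\p(s)=(r.X)(ds)=r\,X(ds)=r\,X^\p(s)$.

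Next I would obtain the Cartan relation~(3) by pairing the Leibniz rule with $X$. Writing $X^\p(rs)=X(d(rs))=X(dr.s)+X(r.ds)$, the first term equals $X(dr).s=X^\p(r)\,s$ because $X$ is right $R$-linear, while the second equals $(X.r)(ds)=(X.r)^\p(s)$ by the right action on $\Omega^\trr$ described above. Summing reproduces \eqref{rc2} exactly.

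The step I expect to carry the real content — and the one flagged as previously overlooked in Remark~\ref{blad} — is the injectivity requirement~(4), namely that $X(dr)=0$ for all $r$ forces $X=0$ in $\Omega^\trr$. Here I would invoke the full FOC structure rather than only the Leibniz rule: $\Omega$ is generated as a right $R$-module by $\{d(r):r\in R\}$, so every $\omega\in\Omega$ has the form $\omega=\sum_i d(r_i).s_i$. Right $R$-linearity of $X$ then gives $X(\omega)=\sum_i X(dr_i)\,s_i=\sum_i X^\p(r_i)\,s_i$, which vanishes once $X^\p=0$; hence $X=0$ and $\ker\p=0$. This is the sole place where generation (not merely Leibniz) is used, and it is precisely what makes $\p$ injective. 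The left case $(\Omega^\trl,\eth)$ follows by the mirror-image argument, now using the left bimodule structure of \eqref{i9} and left-module generation of $\Omega$.

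Finally, for functoriality I would use that the right transpose of a bimodule map is again a bimodule map (stated after \eqref{i9}), so $\mathfrak{L}^\trr\in\hm_{(R,R)}(\Omega_2^\trr,\Omega_1^\trr)$. The identity $\p_2=\p_1\circ\mathfrak{L}^\trr$ is then a one-line pairing computation: for $a\in\Omega_2^\trr$ and $r\in R$,
\[ (\mathfrak{L}^\trr a)^{\p_1}(r)=<<\mathfrak{L}^\trr a\mid d_1 r>>=<<a\mid \mathfrak{L}(d_1 r)>>=<<a\mid d_2 r>>=a^{\p_2}(r), \]
where the middle equality is the definition of the transpose \eqref{i4} and the third uses $d_2=\mathfrak{L}\circ d_1$. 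Thus $\mathfrak{L}^\trr$ is a morphism of right Cartan pairs. The closing assertion, that $\mathfrak{L}$ epi implies $\mathfrak{L}^\trr$ mono, is then immediate from the general transpose property recorded after \eqref{i4} (equivalently, the exactness in Proposition~\ref{key1}).
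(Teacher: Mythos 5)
Your proposal is correct and follows essentially the same route as the paper: the Cartan relation \eqref{rc2} is obtained by pairing $X$ with the Leibniz rule and using the dual bimodule actions \eqref{i3a}, \eqref{i9}; injectivity of $\p$ comes from the generation of $\Omega$ as a right module by the $dr$; functoriality is the same one-line transpose pairing computation; and the epi-to-mono statement is deferred to Proposition~\ref{key1}. The only difference is that you spell out the routine checks (bimodule structure of $\Omega^\trr$, $X^\p\in\nd_0(R)$, left-linearity of $\p$) that the paper leaves implicit or delegates to Remark~\ref{blad}.
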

 \begin{proof}
 	We prove only a right-handed version. First we need to check properties from Definition \eqref{rCp}, e.g: \[(X.r)^\p (s)= <X.r| ds> = <X| r.ds> = 
 	<X| d(rs)- dr.s> =  X^\p(rs) -  X^\p(r) s \ .\]
 	Now, if $X\in\ker\p$ then $<X| dr>=0$ for each $r\in R$. But $dr$ generate $\Omega$ as a right module, so $X=0$.
 	If $ d_2=\mathfrak{L}\circ d_1\,$ then 
 	\[Y^{\p_2}(r)=<Y|d_2r>= <Y|\mathfrak{L}( d_1r)>=<\mathfrak{L}^\trr(Y)| d_1r>=\mathfrak{L}^\trr(Y)^{\p_1}(r)\,.\]
 	The last statement is a consequence of Proposition \ref{key1}.
 \end{proof}
 \begin{remark}
 	Our notation  $X^\p\equiv\p_X$ suggests the relation with  directional partial derivative along the vector field $X$. Indeed, it is the case at least for calculus on free (left or right) bimodules, cf. \cite{bk95a}. In this sense $(\Omega^\trr, \p)$ can be understood as a Cartan pair of right (directional) partial derivatives for FOC $(\Omega, d)$.\\
 \end{remark}
 \begin{example}\label{standard2} (Cartan pairs for commutative algebras)\ 
 	Assuming $R$ commutative i.e., $\mathfrak{Z}(R)=R$, and considering central bimodules with the left and right multiplications the same i.e.,  
 	$X.r=r.X$, we arrive to the  standard Leibniz rule 
 	\[X^\rtt (r\, s)\ =\ X^\rtt (r)\,s + r\, X^\rtt (s) \]
 	which is also satisfied in classical differential geometry of smooth manifolds. Therefore, $X^\rtt\in \mbox{Der}(R)$ becomes a derivation of the commutative algebra $R$. Moreover, the notions of left and  right Cartan pairs coincide in this case. Their duals  become central bimodules too. In fact, $\mbox{Der}(R)^\trr$  provides the module of differential one-forms for $R=C^\infty(M)$. 
 	In that sense, the definition of Cartan pairs generalizes the notion of vector fields on smooth manifolds (cf. Example \ref{standard}). More generally, it is known that $\mbox{Der}(R)\cong\Omega^1_K(R)^\trr$ for any commutative ring. Therefore, smooth differential one-forms can be considered as a bidual of the K\"ahler one-forms $(\Omega^1_K(R)^{\trr})^\trl$. Their dual gives back the derivation.
 \end{example}
 Any FOC $(d, \Omega)$ is related with the short exact sequence \eqref{univ1} of bimodules. Its right dual takes the form
 \begin{equation}\label{univ2}
 	0\longrightarrow \Omega^\trr\stackrel{\pi^\trr}{\longrightarrow}   
 	(\ker\mu)^\trr \longrightarrow  \bar\Omega^\trr\,, 
 \end{equation}
 where $\Omega^\trr$ stands for the bimodule corresponding to the right Cartan pair $(\Omega^\trr, \p)$.  Thus the second term $(\ker\mu)^\trr$ represents the universal right Cartan pair; 
 its structure will be described in the next section.  $\pi^\trr$ is a monomorphism. However,
 one should note that the last arrow is not an epimorphism in general (cf. Proposition \ref{key1}) what in the present context is not important.
 
 The natural question which appears now is if one can reconstruct FOC from an abstract (e.g. right) Cartan  pair? It appears to be not the case in general. Assume $(\X, \rtt)$ as in Definition \ref{rCp}. One can define a differential $d: R\rightarrow \Omega\equiv\X^\trl$ by the formula $<X|dr>\doteq X^\rtt(r)$ for each $X\in \X$. The Leibniz rule follows: $<X|d(rs)>\ =\ X^\rtt(rs)\ =\ X^\rtt(r)s\ +\ (X.r)^\rtt(s)\ =\ <X|dr>\!s+<X.r|ds>
 =<X|dr.s+r.ds>$. The only doubt is if elements $dr$ generate the bimodule
 $\X^\trl$. Indeed, dualizing \eqref{key3} one obtains a dual sequence of right modules
 \begin{equation}\label{k4}
 	0\longrightarrow \tilde\X^\trl \stackrel{\tilde\pi^\trl}{\longrightarrow} \nd_0(R)^\trl \stackrel{\rtt^\trl}{\longrightarrow}\X^\trl\,.
 \end{equation}
 According to general rules, the last arrow does not need to be an epimorphism. This shows that in a generic case it is not possible to reconstruct differential calculus from a Cartan pair.
 In other words, obtaining an epimorphism requires additional assumptions, e.g. that the left module $_R\! R$ is injective \cite{Kasch} or $\X$ is, finitely generated and projective \cite{Bourbaki}. The last corresponds to the reflexivity condition for which the formalism works fine.

 \section{Universal quantum vector fields}
 Our aim in this section is to apply duality functor to some exact sequences obtained  previously. We are focused on splitable universal sequences \eqref{f2}, \eqref{f4}. To this aim  we began with the following
 \begin{proposition}\label{u1} Right dual of a bifree bimodule $R\otimes R$ is
 	$(R\otimes R)^\trr=\nd\,(R)$ with a bimodule structure $(r.A*r')(s)\doteq rA(r's)$, for $A\in \nd\,(R)$. We shall denote this bimodule as $\nd\,(R)^\btr$ i.e., $(R\otimes R)^\trr=\nd\, (R)^\btr$\\
 	Similarly, $(R\otimes R)^\trl=\nd\,(R)^\btl$ with a bimodule structure $(r*A.r')(s)\doteq A(sr).r'$. 
 \end{proposition}
 \begin{proof}  We use the canonical isomorphism $\hm_{(-,R)}(R\otimes R, R)\equiv \nd\,(R)$ by $\tilde A(r\otimes r')=\tilde A(r\otimes 1)r'=A(r)r'$, where $\hm_{(-,R)}(R\otimes R, R)\ni\tilde A\leftrightarrow A\in \nd\,(R)$. Then a bimodule structure follows from the definitions \eqref{i3a}, \eqref{i9}.
 \end{proof}
 \begin{remark}\label{translations}
 	Multiplication from the left by elements $r\in R$ can be treated as an endomorphism $R\ni r\mapsto \L_r\in \nd(R)$ by
 	$\L_r(s)\doteq rs$ (a.k.a. left translations). In fact, this is a bimodule monomorphism since  $\L_r(1)=r$ and $\L_{r'rr"}=r'.\L_r*r"$. We denote its image as subbimodule $\L(_R\! R_R)\equiv R^\btr\subset \nd(R)^\btr$.	
 	Similarly, a multiplication from the right   $R\ni r\mapsto \R_r\in \nd(R)$ right translations by
 	$\R_r(s)\doteq sr$;   $\R_{r'rr''}=r'*\R_r.r''$. We denote its image as subbimodule $\R(_R\! R_R)\equiv R^\btl\subset \nd(R)^\btl$. They intersect in the center of $R$:
 	$\L(_R\! R_R)\cap \R(_R\! R_R)\cong \mathfrak{Z}(R)\supset\kk$.
 \end{remark}
 Summarizing, the vector space $\nd\,(R)$ admits a quatro-module structure with four mutually commuting multiplications (two from the left and two from the right):
 \newline(i) $r.A\doteq \L_r\circ A$,
 \newline(ii) $A*r\doteq A\circ\L_r $,
 \newline(iii) $r*A\doteq A\circ\R_r$,
 \newline(iv) $A.r\doteq \R_r\circ A$.\\
   They were used to construct bimodules $\nd\,(R)^\btr$ and $\nd\,(R)^\btl$. One observes that $\nd_0(R)$ can be embedded in $\nd\,(R)$ either as a left module with the multiplication (i) or as a right module with the multiplication (iv). Therefore, it is neither subbimodule of $\nd\,(R)^\btr$ nor of $\nd\,(R)^\btl$. Two pairs,  (i,iv) and (ii,iii), have central bimodule structures.
 
 The next elements in the sequences \eqref{f2}, \eqref{f4} are bimodules of universal differentials $_\chi(\bar R\otimes R)\cong (R\otimes \bar R)_\gamma \cong\ker\mu$. Their duals will represent universal Cartan pairs. Of course, right duals of isomorphic bimodules are isomorphic to each other by transpose isomorphism, respectively, left duals of isomorphic bimoduls are isomorphic as well.
 \begin{proposition}\label{u2}
 	(I.)\ $_\chi(\bar R\otimes R)^\trr=\hm (\bar R, R)\equiv \nd_0(R)$ with a bimodule structure $(r.A\odot r')(s)\doteq rA(r's)-rA(r')s$. Of course, these multiplications do preserve the condition $A(1)=0$. We shall denote this bimodule as $\nd_0^\rightthreetimes(R)$. It can be also considered as  a left submodule of $\nd(R)^\btr$.\\
 	(II.)\ $(R\otimes \bar R)_\gamma^\trl=\nd_0^\leftthreetimes (R)$ with a bimodule structure $(r\odot A.r')(s)\doteq A(sr)r'-sA(r)r'$.  Since the bimodules $_\chi(\bar R\otimes R)\cong (R\otimes \bar R)_\gamma$ are isomorphic then their right (respectively, left) duals are isomorphic too.
 \end{proposition}
 \begin{proof}  Using the splitting $R=\bar R\oplus \kk$ we have $\hm(R,R)=\hm(\bar R, R)\oplus \hm(\kk, R)$. $\hm(\kk, R)$ can be identify with $R$, while $\hm(\bar R, R)$ with $\nd_0(R)$. Indeed, if $A(1)=0$ then setting $A(\bar r) \doteq A(r)$ is well defined mapping. Then bimodule structures follow from the definitions  \eqref{i3a}, \eqref{i9}.
 \end{proof}
 \begin{remark}
 	In order to have a bimodule isomorphism $\mathfrak{S}:\nd_0^\ltt(R)\rightarrow \nd_0^\rtt (R)$, between left and right duals, one needs to find the element $\mathfrak{S}\in {\rm Iso}(\nd_0(R))$ such that for any $A\in \nd_0(R)$ and $r\in R$ the following two conditions:
 	\newline(i) \ \ $\mathfrak{S}(A\circ \R_r-\R_{A(r)})= \L_r\circ\mathfrak{S}(A)$\,,
 	\newline(ii) \ \ $\mathfrak{S}(\R_r\circ A)=\mathfrak{S}(A)\circ\L_r-\L_{\mathfrak{S}(A)(r)}$\,,
 	\newline are satisfied. However, in general, such mapping does not exist.
 \end{remark}
 It follows then that $\nd_0^\rightthreetimes(R)$  (respectively, $\nd_0^\leftthreetimes (R)$) can be treated as a bimodules of universal Cartan pairs (or universal noncommutative vector fields). Therefore, the actions $\rtt$  (respectively, $\ltt$) are, in fact, tautological actions. 
 Each right (respectively, left) Cartan pair is a subbimodule of the corresponding universal one, see \eqref{k4}. Finally, dualizing splitable short exact sequences \eqref{f2}, \eqref{f4} one obtains (cf. \eqref{univ2})
 \begin{proposition}\label{final} The following exact sequence of right modules
 	\begin{equation}\label{f2d}
 		0\longrightarrow \nd_0^\rtt (R)\stackrel{(D\otimes \id_R)^\trr}{\longrightarrow} 
 		\nd(R)^\btr
 		\stackrel{(1\otimes\id_R)^\trr}{\longrightarrow}R^\btr 
 		\longrightarrow 0
 	\end{equation}
 	(respectively, left modules)
 	\begin{equation}\label{f4d}
 		0\longrightarrow \nd_0^\ltt (R)\stackrel{(\id_R\otimes D )^\trl}{\longrightarrow} 
 		\nd(R)^\btl 
 		\stackrel{(\id_R\otimes 1)^\trl}{\longrightarrow}R^\btl 
 		\longrightarrow 0\,
 	\end{equation}
 	splits.
 \end{proposition}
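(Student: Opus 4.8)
The plan is to read \eqref{f2d} and \eqref{f4d} off as the images of the split short exact sequences \eqref{f2} and \eqref{f2b} under the duality functors $(-)^\trr$ and $(-)^\trl$, and then to quote the last assertion of Proposition \ref{key1}, namely that the dual of a split short exact sequence is again split. Almost all of the content is therefore already in place; what remains is to identify the dual objects and maps and to exhibit the splitting explicitly.

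First I would collect the identifications recorded just above: $(R\otimes R)^\trr=\nd(R)^\btr$, $(\bar R\otimes R)^\trr=\nd_0^\rtt(R)$, and the right dual of $R_R$, namely $R^\btr=\L({}_R\! R_R)$ sitting inside $\nd(R)^\btr$ as the left multiplications $\L_r$. Applying $(-)^\trr$ to \eqref{f2} and reversing the arrows reproduces \eqref{f2d}: the transpose $(1\otimes\id_R)^\trr$ sends $A$ to $\L_{A(1)}$, while $(D\otimes\id_R)^\trr$ embeds $\nd_0^\rtt(R)$ into $\nd(R)^\btr$. The left-handed sequence \eqref{f4d} is obtained the same way by applying $(-)^\trl$ to the left-module split sequence \eqref{f2b}, using $(R\otimes R)^\trl=\nd(R)^\btl$, $(R\otimes\bar R)^\trl=\nd_0^\ltt(R)$ and $R^\trl=R^\btl$.

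The essential step is the splitting, and here I would simply transport the splitting maps of the universal sequences. Since \eqref{f2} splits, the multiplication $\mu\colon R\otimes R\to R$ retracts $1\otimes\id_R$ (indeed $\mu(1\otimes r)=r$); by contravariance its transpose satisfies $(1\otimes\id_R)^\trr\circ\mu^\trr=\id$, so $\mu^\trr$ is a section of the right-most arrow of \eqref{f2d} and splits it. Likewise $\mathfrak{I}^\trr$ retracts $(D\otimes\id_R)^\trr$, arising from the section $\mathfrak{I}$ of \eqref{f2}. For \eqref{f4d} the corresponding section is $\mu^\trl$, the transpose of the (left-module) retraction $\mu$ of $\id_R\otimes 1$ in \eqref{f2b}.

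The one point that genuinely requires care, and the reason the splitting cannot be dropped, is surjectivity of the right-most arrow. Proposition \ref{key1} explicitly warns that dualization need not preserve epimorphisms, so exactness of \eqref{f2} by itself would yield only the left-exact fragment $0\to\nd_0^\rtt(R)\to\nd(R)^\btr\to R^\btr$ with no control over the image of $(1\otimes\id_R)^\trr$. It is precisely the section $\mu^\trr$ furnished by the splitting that forces this map to be onto and closes the sequence on the right; the left-handed case is identical with $\mu^\trl$. Beyond the identifications already tabulated, no computation with the quatro-module structure of $\nd(R)$ is needed.
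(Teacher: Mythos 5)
Your proposal is correct and follows essentially the same route as the paper: the paper likewise dualizes the split sequences \eqref{f2}, \eqref{f4} and exhibits the splitting, writing it out concretely as $B=(B-\L_{B(1)})+\L_{B(1)}$ (resp. $B=(B-\R_{B(1)})+\R_{B(1)}$), which is exactly the decomposition your section $\mu^\trr$ produces once one identifies $(1\otimes\id_R)^\trr(B)=\L_{B(1)}$. The only nuance the paper adds, worth keeping in mind, is that this projection is a module map on one side only (left for \eqref{f2d}, right for \eqref{f4d}), since the outer multiplications on $\nd_0^\rtt(R)$ and $\nd(R)^\btr$ differ.
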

 \begin{proof} 
 	Let $B\in\nd (R)$ be any endomorphism. It can be simply decomposed into
 	two parts as
 	\[B= (B-\L_{B(1)})\,+\, \L_{B(1)}\,\]
 	according to \eqref{f2d}. The first component of this splitting represent a projection into the right Cartan pair $\nd_0^\rtt (R)$. The  projection is a left module map. It is not a right module map since  multiplications from the right in $\nd_0^\rtt (R)$ and in $\nd(R)^\btr$ are different.  Similarly, 
 	\[B= (B-\R_{B(1)})\,+\, \R_{B(1)}\,.\]
 	represents the splitting \eqref{f4d}. 
 \end{proof}
 \begin{remark}\label{coord3} (Coordinate description, cf. \cite{bk95a}) 
 	Let us consider FOC $d\colon{R}\longrightarrow\Omega$, where $\Omega=_\beta\!(V\otimes R)$ is a right free module with left module structure determined by $\beta$. Utilizing the notation from Corollary \ref{coord1} one obtains $dr=\xi^i\otimes \p_i(r)$ with "right partial derivatives" $\p_i\in \nd_0 (R)$. Thus, due to the Lebniz rule, the following property holds $$\p_i(rs)=\p_i(r)s+\hat\beta_i^k(r)\p_k(s)\,.$$ One should note that for each $r\in R$ only the finite number of $\p_i(r)$ is non-vanishing. More precisely, taking into account \eqref{univ2}, one can write  $\p_i\in  \nd_0^\rtt (R)\cong (\ker\mu)^\trr$. Therefore, there exists some $\hat\p_i\in \Omega^\trr $ such that $\pi^\trr(\hat\p_i)=\p_i$. 
 	Indeed, there exist well defined functionals $\xi_i\in V^\bullet$ ("dual basis") such that $\xi_i(\xi^k)=\delta_i^k,  i,k\in I$ which are linearly independent \footnote{They provide a basis in $V^\bullet$ when $\dim V<\infty$. For $\dim V=\infty$ they  should be completed to a full basis in $V^\bullet$.}.
 	Next, we define elements $1\otimes \xi_i\in R\otimes V^\bullet$ 
 	\begin{equation}\label{coord4}
 		\hat \p_i\doteq 1\otimes \xi_i\in R\otimes V^\bullet \hookrightarrow (V\otimes R)^\trr\cong \Omega^\trr\hookrightarrow (\ker\mu)^\trr\cong\nd_0^\rtt (R)
 	\end{equation}	
 	due to \eqref{i5}, \eqref{iii5}, \eqref{univ2}. We recall that the papers \cite{bko94,bk95,bk95a,bk95b} study optimal calculi on coordinate algebras defined by generators and relations.
 \end{remark} 
 \begin{example}\label{ex1} Consider the internal derivation $d_1r=[\xi^1\otimes 1, r]=\xi^1\otimes r-\xi^k\otimes \hat\beta_k^1(r)$ in the bimodule from Corollary \ref{coord1}. Thus $\p_i=\delta_i^1\id_R-\hat\beta_i^1\in \nd_0(R)$. Considering the composition
 	$\p_i\circ\p_k=\delta_k^1\delta_i^1\id_R -\delta_i^1\hat\beta_k^1 - \delta_k^1\hat\beta_i^1 +\hat\beta_i^1\circ\hat\beta_k^1\in \nd_0(R)$ one finds that the second order differential operator with respect to the original  FOC becomes the first order for the universal one.
 \end{example}
 It is known that in the classical case linear differential operators of any finite order act on functions through their embedding into the endomorphisms algebra, see, e.g. \cite{Lunts97,GS09}. Particularly, they can be obtained by successive composition of vector fields.  
 Our results imply that the most general universal (linear) differential operators are of the first order.
 \begin{remark}\label{Der}
 	One should notice that for a commutative algebra $R$ not all elements of $\nd_0(R)$ are derivations. For example, if $Y,Y\in Der(R)\subset \nd_0(R)$
 	then $X\circ Y\in \nd_0(R)$ is not a derivation. 
 	In general, for noncommutative $R$,  the Lie algebra $Der(R)$ is a module over the center $\mathfrak{Z}(R)$ of $R$ \cite{Michor96}.
\end{remark}
 
 \section{ Summary}
 In this paper,  we realized again in more detail, using appropriate algebraic tools, the studies of noncommutative vector fields proposed originally in \cite{AB96,AB97} under the name of Cartan pairs. We emphasized the asymmetric behavior of this notion: to each FOC one can associate a bimodule of right/left vector fields. However, in contrast to the classical case, reconstruction of the initial FOC from such obtained left or right Cartan pair is not always  possible.
 We were focused on properties that are valid in the most general case.  A more detailed study should take into account
 specific properties of modules and/or the ring $R$. For example, projective (respectively, finitely generated) modules share most of the properties specific for vector spaces (respectively, finite-dimensional vector spaces). Also, so-called rings with perfect duality are easier to handle \cite{Kasch}. In general, they should be investigated on a case by case basis.  
 
 A modern approach to this problem takes into account the braided structure of monoidal categories related with some quasi-triangular Hopf algebras (quantum groups)  \cite{Aschieri20,Weber,Landi}  (see also earlier papers \cite{bmo92,Oz95,Lychagin}). These include also Woronowicz bicovariant calculi on Hopf modules (a.k.a Yetter-Drinfels modules) \cite{Woronowicz,Llena03,Peter96}.
 
 Higher-order quantum operators composed of quantum vector fields  and their relationships with deformations of Weyl algebra are not yet properly described, see, e.g. \cite{Lunts97,GMS05,GS09,Lychagin}. Also adopting the present framework to the category of Hilbert modules over (separable) $C^*$-algebras might be challenging, e.g. \cite{Manoh17}. Those are interesting subjects for future investigations.
 
 Finally, let us remark that dualizing, in a category of the corresponding vector spaces, the scheme presented above one can obtain codifferential calculi on coalgebras as well as some co-Cartan pairs called vector cofields. These can be done by reversing all arrows in diagramatic definitions of all objects and mappings involved. For example, reversing arrows in the definition of a unital associative algebra one obtains a counital coassociative coalgebra (a coring). Particularly, the notion of (left or right) codual for bicomodules over coalgebra can be introduced as well \cite{AB99}.
 
 \section*{Acknowledgments}
This research was inspired, in the middle of the 1990s, by numerous talks with Zbigniew, especially about Cartan type duality between one-forms and derivations on smooth manifolds, and by joint papers with Slava Kharchenko.

The current work has been supported by the Polish National Science Centre (NCN), project UMO-2017/27/B/ST2/01902. 
I would like to thank Thomas Weber for his interest and discussions during the Corfu  Workshop on Quantum Geometry, Field Theory and Gravity 2021.

\end{document}